\newtheorem{theorem}{Theorem}
\newtheorem{lemma}[theorem]{Lemma}
\newtheorem{proposition}[theorem]{Proposition}
\newtheorem{corollary}[theorem]{Corollary}
\theoremstyle{definition}
\newtheorem{definition}[theorem]{Definition}
\theoremstyle{remark}
\newtheorem{remark}[theorem]{Remark}
\newcommand{\F}{\mathbb{F}}
\newcommand{\G}{\mathbf{G}}
\newcommand{\e}{\mathbf{e}}
\DeclareMathOperator{\wt}{wt}
\DeclareMathOperator{\supp}{supp}
\title{Information-Set Decoding for Convolutional Codes}
\author{Niklas Gassner, Julia Lieb, Abhinaba Mazumder, and Michael Schaller}
\begin{document}

\maketitle

\abstract{In this paper, we present a framework for generic decoding of convolutional codes, which allows us to
do cryptanalysis of code-based systems that use convolutional codes as public keys. We then apply this framework to information set decoding, study success probabilities and give tools to choose variables. Finally, we use this to attack two cryptosystems based on convolutional codes. In the case of \cite{bolkema2017variations}, our code recovered about 74\% of errors in less than 10 hours each, and in the case of \cite{almeidaBNS21}, we give experimental evidence that 80\% of the errors can be recovered in times corresponding to about 70 bits of operational security, with some instances being significantly lower.}

\maketitle

\section{Introduction}\label{sec1}

Current cryptographic systems rely on the hardness of integer factorisation or the discrete logarithm problem. These problems can be solved efficiently with Shor's algorithm on a quantum computer \cite{shor}.
This led to a new interest in post-quantum cryptography, a part of which is cryptography based on linear codes. The first code-based cryptosystem was proposed in 1978 by McEliece \cite{mceliece} and uses binary Goppa-codes. The proposal is still fundamentally intact and the proposal Classic McEliece \cite{NISTMcEliece} is based on it.
Round 4 of the currently ongoing post-quantum competition of the National Institute of Standards and Technology (NIST) features three code-based public-key submissions, the already mentioned Classic McEliece, BIKE \cite{NISTBike}, and HQC \cite{NISTHQC}, which was recently selected for standardisation \cite{nist2025status}. 

There have been proposals not featured in the NIST post-quantum competition based on convolutional codes, such as \cite{londahl2012new} (attacked in \cite{londahlattack}), \cite{almeidaBNS23}, or \cite{moufek2018new}. The rationale behind this is that the key size is linear in the memory of the convolutional code, while security levels are reliant on the size of the sliding generator or parity-check matrix, leading to an expected exponential increase in time for generic decoding methods and thus, increased security.
For example, in the Viterbi algorithm \cite{viterbi} the decoding complexity is exponential in the degree.
Additionally for the receiver of the ciphertext, decoding a convolutional code can be done sequentially.

In this paper, we provide a framework for generic decoding of convolutional codes (which are not tail-biting) which is similar to sliding window decoding and apply it to information-set decoding.
The framework resembles the idea of sequential decoding, which was introduced in \cite{wozencraft}.
We study success probabilities of several aspects of the algorithm and give tools to choose parameters in the framework.
Finally, we use it to attack the system proposed in \cite{bolkema2017variations} and a set of parameters of the system proposed in \cite{almeidaBNS21} which is based on the system from \cite{almeida2019mceliece}.
Note that there exist updated versions \cite{almeidaBNS23, almeidaBNS24convolutional} of \cite{almeidaBNS21}  with tail-biting convolutional codes, where our attack does not apply.

The paper is organised as follows. 

In Section \ref{sec: basics}, we give an overview of linear codes and especially convolutional codes, where we introduce all the notions and notations necessary for understanding our work.  

In Section \ref{sec: isdbasics}, we cover the basics of information-set decoding, a class of generic decoding methods for linear codes, where we mainly talk about Prange's algorithm.

The remaining sections cover our own contribution.
In Section \ref{sect:isd-convolutional-codes}, we discuss information-set decoding for convolutional codes. We first give a general framework for generic decoding of convolutional codes that reduces the problem to decoding a smaller block code multiple times and then discuss how we apply it to information-set decoding.
We justify our choice of a depth-first algorithm, then discuss the issues posed by low weight codewords and how we adapt the algorithm to address these issues. Many of the components of our algorithm are probability based, so we provide tools that allow us to compute success probabilities for given parameters and thus help us to choose parameters.

Finally, in Section \ref{sec:experiments} we describe how we can apply our framework to McEliece type cryptosystems whose public key is a convolutional code.
In addition, we describe our implementation of the attack and the results of attacking the cryptosystem proposed in \cite{bolkema2017variations} and a parameter set of the cryptosystem proposed in \cite{almeidaBNS21}.
Since computation times were in most cases infeasible for the latter, we only provide an estimate of the computation time necessary for recovering errors for most random seeds.
We also ran the algorithm in full for two cases, where the estimates of the computation time were low enough, and managed to recover the correct error in both cases.

\subsection{Notation and Conventions}

For the convenience of the reader, we summarize some notation that we will use throughout the paper.
\begin{itemize}[label={--}]
    \item $\mathbb{F}_q$ the finite field with $q$ elements,
    \item $\mathcal{C} \subset \mathbb{F}_q[z]^n$ a convolutional code, where $n$ is its length and $k$ its rank as an $\mathbb{F}_q[z]$-submodule of $\mathbb{F}_q[z]^n$,
    \item $C \subseteq \F_q^N$ a block code of length $N$ and dimension $K$,
    \item matrices and vectors are denoted with bold capital respectively small letters (e.g. $\mathbf{G}$ and $\mathbf{v}$), 
    \item $\mathbf{G}(z)$ and $\mathbf{H}(z)$ for a generator matrix and parity-check matrix, respectively, of a convolutional code $\mathcal{C}$,
    \item $I$ an information set.
    
\end{itemize}

\section{Basics of Linear Codes and Convolutional Codes}
\label{sec: basics}

In this section we present definitions and results for linear block and convolutional codes that will be important in later sections of this paper.

\begin{definition}
Let $K ,N \in\mathbb N$ with $K\leq N$. A linear $[N,K]$-\textbf{block code} $C$ is a $K$-dimensional subspace of $\F_q^N$.
Hence, there is a full rank matrix $\G\in\F_q^{K\times N}$ such that $$C=\{\mathbf{c}\in\F_q^N\ |\ \mathbf{c}=\mathbf{m\G}\ \text{for}\ \mathbf{m}\in\F_q^K\}.$$
 $\G$ is called  \textbf{generator matrix}, $N$ \textbf{length} and ${K}/{N}$ \textbf{rate} of $C$.
\end{definition}

While the generator matrix is used for the encoding of a message, for the decoding of a received word one usually uses another matrix, called parity-check matrix, as defined in the following.

\begin{definition}
   Let $C$ be a linear $[N,K]$-block code.
    A full rank matrix $\mathbf{H} \in \F_q^{(N-K)\times N}$ such that
    $$C=\{\mathbf{c}\in\F_q^N\ |\ \mathbf{Hc}^\top=0 \}$$
    is called \textbf{parity-check matrix} of $C$.
\end{definition}

\begin{lemma}\label{gen_par}
    Let $\G\in\F_q^{K\times N}$ be a generator matrix of a linear block code $C$. A matrix $\mathbf{H}\in\F_q^{(N-K)\times N}$ is a parity-check matrix for $C$ if and only if $\mathbf{HG}^\top=0$ and $\mathbf{H}$ is full rank.
\end{lemma}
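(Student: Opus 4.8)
The plan is to prove the two implications of the biconditional separately, treating the right-hand side $C' \coloneqq \{\mathbf{c}\in\F_q^N\ |\ \mathbf{Hc}^\top=0\}$ as the null space of $\mathbf{H}$ (acting on column vectors) and comparing it with $C$.

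For the forward direction, suppose $\mathbf{H}$ is a parity-check matrix for $C$. Then $\mathbf{H}$ is full rank by definition, so only the condition $\mathbf{HG}^\top=0$ remains. Here I would use that each row $\mathbf{g}_i$ of $\G$ is itself a codeword, since the rows of a generator matrix span $C$. As parity-check matrix, $\mathbf{H}$ annihilates every codeword, so $\mathbf{H}\mathbf{g}_i^\top=0$ for all $i$, and stacking these relations columnwise yields $\mathbf{HG}^\top=0$.

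For the converse, assume $\mathbf{HG}^\top=0$ and $\mathbf{H}$ full rank; I must show $C=C'$. First I would establish the inclusion $C\subseteq C'$: any $\mathbf{c}\in C$ can be written as $\mathbf{c}=\mathbf{mG}$ for some $\mathbf{m}\in\F_q^K$, whence
$$\mathbf{Hc}^\top=\mathbf{H}(\mathbf{mG})^\top=\mathbf{HG}^\top\mathbf{m}^\top=0,$$
so $\mathbf{c}\in C'$. To upgrade this inclusion to equality, I would invoke a dimension count. Since $\G$ is full rank with $K$ rows, $\dim C=K$; since $\mathbf{H}$ is full rank with $N-K$ rows, the rank–nullity theorem gives $\dim C'=N-(N-K)=K$. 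A subspace contained in another subspace of the same finite dimension must coincide with it, so $C=C'$, and $\mathbf{H}$ is therefore a parity-check matrix.

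The argument is essentially routine linear algebra, so there is no serious obstacle; the only point requiring care is making the dimension count precise, in particular correctly identifying $C'$ as the null space so that rank–nullity applies and that the equal dimensions force equality of the subspaces rather than mere containment.
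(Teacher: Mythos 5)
Your proof is correct and complete: the forward direction follows since the rows of $\G$ are codewords annihilated by $\mathbf{H}$ (with full rank holding by definition), and the converse correctly combines the inclusion $C \subseteq \ker(\mathbf{c} \mapsto \mathbf{Hc}^\top)$ with the rank--nullity dimension count to force equality. The paper states this lemma without proof, treating it as a standard fact, and your argument is precisely the routine one a textbook would supply, so there is nothing to compare against and no gap to report.
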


\begin{definition}
The \textbf{support} of $\mathbf{c}=(c_1,\hdots,c_N)\in\F_q^N$ is defined as 
$${\rm supp}(\mathbf{c}):=\{i \in \{1,\hdots,N\} \, : \, c_i \neq 0\}.$$
The \textbf{(Hamming) weight} of $\mathbf{c}=(c_1,\hdots,c_N)\in\F_q^N$, denoted by $\wt(\mathbf{c})$, is defined as $|{\rm supp}(\mathbf{c})|$, i.e. as the number of nonzero components of the vector $\mathbf{c}$. 
\end{definition}
\begin{definition}
    The \textbf{minimum distance} of a linear block code $C$ is defined as $$d:=\min\{\mathrm{wt}(\mathbf{c}) \: | \: \mathbf{c} \in C \setminus \{\mathbf{0}\}\}.$$
\end{definition}

In the following, we introduce the basics of convolutional codes, which can be understood as
a generalization of linear block codes.
More details about convolutional codes can be found in \cite{bookchapter}.

\begin{definition}
    A \textbf{convolutional code} $\mathcal{C}$ of length $n$ is defined as an $\F_q[z]$-submodule of $\F_q[z]^n$. Let $k$ be the rank of $\mathcal{C}$. Then, $k/n$ is the \textbf{rate} of $\mathcal{C}$ and we call $\mathcal{C}$ an $(n, k)$ convolutional code.\end{definition}
    \begin{remark}
    Note that $\F_q[z]$ is a Principal Ideal Domain, and thus all submodules of $\F_q[z]^n$ are free. Then, it makes sense to speak of the rank $k$ of the submodule.
\end{remark}
\begin{definition}
    There exists a polynomial matrix $\G(z)\in\F_q[z]^{k\times n}$ such that
    \begin{align*}
        \mathcal{C}  = \{ \mathbf{c}(z) = \mathbf{m}(z)\mathbf{G}(z) \ |\ \mathbf{m}(z) \in \F_q[z]^k \}.
    \end{align*}
 $\G(z)$ is called \textbf{generator matrix} of $\mathcal{C}$. If we write $\G(z) = \sum_{i = 0}^{\mu} \G_iz^i$ with $\G_\mu \not = 0$, then $\mu$ is called \textbf{memory} of $\G(z)$. 
 The maximal degree of the full size (i.e. $k \times k$) minors of $\G(z)$ is called is called \textbf{degree} $\delta$ of $\mathcal{C}$.
\end{definition}
Note that a generator matrix is not unique and the memory $\mu$ depends on $\G(z)$, however the degree $\delta$ of $\mathcal{C}$ does not depend on the choice of the generator matrix.

\begin{definition}\label{def:unimodular}
    A polynomial matrix $\mathbf{U}(z) \in \F_q[z]^{k\times k}$ is called \textbf{unimodular} if there exists $\mathbf{V}(z) \in \F_q[z]^{k\times k}$ such that $\mathbf{V}(z)\mathbf{U}(z) = \mathbf{U}(z) \mathbf{V}(z) = I_k$.
\end{definition}

Unimodular matrices are used to determine whether two matrices generate the same code.

\begin{remark}\label{thm:unimod_generator}
    Two full rank matrices $\G(z), \widetilde{\G}(z) \in \F_q[z]^{k\times n}$ are generator matrices of the same code if and only if there exists a unimodular matrix $\mathbf{U}(z) \in \F_q[z]^{k\times k}$ such that 
    \begin{align*}
        \widetilde{\G}(z) = \mathbf{U}(z)\G(z).
    \end{align*}
$\G(z)$ and $\widetilde{\G}(z)$ are then called \textbf{equivalent}.    
\end{remark}

In the next part of this section, we want to consider parity-check matrices for convolutional codes. In contrast to linear block codes, not every convolutional code admits a parity-check matrix.
 
\begin{definition}
    Let $\mathcal{C}$ be an $(n, k)$ convolutional code. A full row rank polynomial matrix $\mathbf{H}(z) \in \F_q[z]^{(n-k)\times n}$ is called \textbf{parity-check matrix} of $\mathcal{C}$, if
    \begin{align*}
        \mathcal{C} = \{ \mathbf{c}(z) \in \F_q[z]^n \ |\  \mathbf{H}(z)\mathbf{c}(z)^\top = 0 \}.
    \end{align*}
    If such a parity-check matrix exists for the code $\mathcal{C}$, then $\mathcal{C}$ is called \textbf{non-catastrophic}.
\end{definition}

In the following, we will describe how to decide whether a convolutional code possesses a parity-check matrix and how to calculate it if it exists.

\begin{definition}
    For $k\leq n$, $\G(z)\in \F_q[z]^{k\times n}$ is called \textbf{left prime} if in all factorisations $\G(z) = \mathbf{L}(z)\widehat{\G}(z)$, with $\mathbf{L}(z) \in \F_q[z]^{k\times k}$ and $\widehat{\G}(z) \in \F_q[z]^{k \times n}$, the left factor $\mathbf{L}(z)$ is unimodular.
\end{definition}

\begin{definition}
    Let $\G(z)\in \F_q[z]^{k\times n}$ with $k\leq n$ be full (row) rank. Then, there exists a unimodular matrix $\mathbf{U}(z) \in \F_q[z]^{n \times n}$ such that
    \begin{align}\label{herm}
        \G_{rH}(z)= \G(z)\mathbf{U}(z) 
            = \begin{pmatrix}
                h_{11}(z) & 0 & \hdots & 0 & 0 & \hdots & 0 \\
                \vdots & \ddots  & \ddots  & \vdots & \vdots &  & \vdots\\
                \vdots &  & \ddots &  0 & \vdots &  & \vdots \\
                h_{k1}(z) & \dots & \hdots & h_{kk}(z) & 0 & \hdots & 0 \\
            \end{pmatrix}
    \end{align}
    where $h_{ii}(z)$ are monic for $i = 1, \dots , k$ and $\deg(h_{ii}) > \deg(h_{ij})$ for each $j< i$. The matrix $\G_{rH}(z)$ is called \textbf{row Hermite form} of $\G(z)$.
\end{definition}

\begin{theorem}\cite{bookchapter}\label{thm:charact_leftprime}
    Consider $\G(z)\in \F_q[z]^{k\times n}$ with $k\leq n$. The following statements are equivalent:
    \begin{enumerate}
        \item $\G(z)$ is left prime.
        \item The row Hermite form of $\G(z)$ is $\begin{pmatrix} I_k & \mathbf{0} \end{pmatrix}$.
        \item There exists $\mathbf{M}(z) \in \F_q[z]^{n\times k}$ such that $\G(z)\mathbf{M}(z) = I_k$.
        \item $\G(z)$ can be completed to a unimodular matrix, i.e. there exists $\mathbf{E}(z)\in\F_q[z]^{(n-k)\times n}$ such that $\begin{pmatrix}\mathbf{G}(z) \\ \mathbf{E}(z) \end{pmatrix}$ is unimodular.
        \item $rk(\G(\lambda)) = k$ for all $\lambda \in \Bar{\F}_q$, where $\Bar{\F}_q$ denotes the algebraic closure of the field $\F_q$.
    \end{enumerate}
\end{theorem}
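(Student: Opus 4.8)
The plan is to show all five statements are equivalent by proving the single cycle $(1)\Rightarrow(2)\Rightarrow(3)\Rightarrow(5)\Rightarrow(4)\Rightarrow(1)$, which visits each statement once and keeps the two genuinely different kinds of argument separated: the structural manipulations of the row Hermite form, and the pointwise-rank argument over $\bar{\F}_q$. For $(1)\Rightarrow(2)$ I would start from the row Hermite form $\G_{rH}(z)=\G(z)\mathbf{U}(z)=\begin{pmatrix}T(z)&\mathbf{0}\end{pmatrix}$, where $T(z)$ is the lower-triangular $k\times k$ block with monic diagonal $h_{ii}(z)$. Setting $\mathbf{V}(z):=\mathbf{U}(z)^{-1}$ and letting $\mathbf{V}_1(z)\in\F_q[z]^{k\times n}$ be its first $k$ rows, block multiplication yields a factorization $\G(z)=T(z)\mathbf{V}_1(z)$ with square left factor $T(z)$. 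Left primeness then forces $T(z)$ to be unimodular, so $\det T(z)=\prod_i h_{ii}(z)$ is a nonzero constant; since each $h_{ii}$ is monic this forces $h_{ii}=1$, and the normalization $\deg(h_{ii})>\deg(h_{ij})$ for $j<i$ then kills every subdiagonal entry, giving $T(z)=I_k$ and hence $(2)$.

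The next two implications are short. For $(2)\Rightarrow(3)$, if $\G(z)\mathbf{U}(z)=\begin{pmatrix}I_k&\mathbf{0}\end{pmatrix}$, then taking $\mathbf{M}(z)$ to be the first $k$ columns of $\mathbf{U}(z)$ gives $\G(z)\mathbf{M}(z)=I_k$. For $(3)\Rightarrow(5)$, I would simply evaluate $\G(z)\mathbf{M}(z)=I_k$ at an arbitrary $\lambda\in\bar{\F}_q$: the relation $\G(\lambda)\mathbf{M}(\lambda)=I_k$ exhibits a right inverse of $\G(\lambda)$ over the field $\bar{\F}_q$, forcing $\mathrm{rk}(\G(\lambda))=k$.

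The step I expect to carry the conceptual weight is $(5)\Rightarrow(4)$, since this is where algebraic closedness is indispensable. Again pass to the Hermite form $\G(z)\mathbf{U}(z)=\begin{pmatrix}T(z)&\mathbf{0}\end{pmatrix}$. As $\mathbf{U}(z)$ is unimodular, $\mathbf{U}(\lambda)$ is invertible for every $\lambda$, so $\mathrm{rk}(T(\lambda))=\mathrm{rk}(\G(\lambda))=k$ and thus $\det T(\lambda)=\prod_i h_{ii}(\lambda)\neq 0$ for all $\lambda\in\bar{\F}_q$. Over an algebraically closed field a polynomial with no roots is a nonzero constant, so each monic $h_{ii}$ equals $1$; the degree conditions again annihilate the subdiagonal, yielding $T(z)=I_k$ and hence $\G(z)\mathbf{U}(z)=\begin{pmatrix}I_k&\mathbf{0}\end{pmatrix}$. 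Reading this as $\G(z)=\begin{pmatrix}I_k&\mathbf{0}\end{pmatrix}\mathbf{U}(z)^{-1}$ identifies $\G(z)$ with the first $k$ rows of the unimodular matrix $\mathbf{U}(z)^{-1}$, so appending its remaining $n-k$ rows as $\mathbf{E}(z)$ completes $\G(z)$ to the unimodular matrix $\mathbf{U}(z)^{-1}$, which is exactly $(4)$.

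Finally, $(4)\Rightarrow(1)$ is a determinant argument. Given any factorization $\G(z)=\mathbf{L}(z)\widehat{\G}(z)$, I block-multiply to write
\[
\begin{pmatrix}\G(z)\\ \mathbf{E}(z)\end{pmatrix}
=\begin{pmatrix}\mathbf{L}(z)&\mathbf{0}\\ \mathbf{0}&I_{n-k}\end{pmatrix}
\begin{pmatrix}\widehat{\G}(z)\\ \mathbf{E}(z)\end{pmatrix}.
\]
Taking determinants, $\det\mathbf{L}(z)$ divides $\det\!\begin{pmatrix}\G(z)\\ \mathbf{E}(z)\end{pmatrix}$, which is a nonzero constant by unimodularity; hence $\det\mathbf{L}(z)\in\F_q^{*}$ and $\mathbf{L}(z)$ is unimodular, proving $(1)$. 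The only points demanding care are the repeated use of the monic/degree normalization of the Hermite form to pin down $T(z)=I_k$, and ensuring the rank statement in $(5)$ is read over $\bar{\F}_q$ rather than over $\F_q$ — the passage from "no root" to "constant" being precisely what algebraic closedness supplies.
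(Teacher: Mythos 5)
Your proof is correct, and there is in fact nothing in the paper to compare it against: Theorem \ref{thm:charact_leftprime} is stated with a citation to \cite{bookchapter} and no proof is given in the paper itself. Judged on its own merits, your cycle $(1)\Rightarrow(2)\Rightarrow(3)\Rightarrow(5)\Rightarrow(4)\Rightarrow(1)$ is complete and each link is sound: extracting the factorization $\G(z)=T(z)\mathbf{V}_1(z)$ from the Hermite form is the right way to convert left primeness into unimodularity of the triangular block, and the monic/degree normalization does force $T(z)=I_k$; the evaluation argument for $(3)\Rightarrow(5)$ and the determinant-divisibility argument for $(4)\Rightarrow(1)$ are correctly executed; and $(5)\Rightarrow(4)$ correctly isolates the one place where algebraic closedness is indispensable (a polynomial with no roots in $\Bar{\F}_q$ is a nonzero constant). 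Two points you leave tacit deserve a line each. First, at the end of $(4)\Rightarrow(1)$ you pass from $\det\mathbf{L}(z)\in\F_q^{*}$ to ``$\mathbf{L}(z)$ is unimodular''; this needs the adjugate formula $\mathbf{L}(z)^{-1}=\det(\mathbf{L}(z))^{-1}\operatorname{adj}(\mathbf{L}(z))$, which shows the inverse is polynomial. Second, the paper defines the row Hermite form only for \emph{full row rank} matrices, so invoking it in $(1)\Rightarrow(2)$ and $(5)\Rightarrow(4)$ requires knowing that left primeness, respectively condition $(5)$, forces $\G(z)$ to have full row rank over $\F_q(z)$. For $(5)$ this is immediate (a nonvanishing $k\times k$ minor at any $\lambda$ means not all $k\times k$ minors are the zero polynomial); for $(1)$ one can note that a rank-deficient $\G(z)$ factors through its Smith form as $\G(z)=\bigl(\mathbf{A}(z)\mathbf{D}'(z)\bigr)\bigl(\mathbf{D}(z)\mathbf{B}(z)\bigr)$ with $\mathbf{D}'(z)$ chosen non-unimodular acting trivially on the zero row of $\mathbf{D}(z)$, contradicting left primeness. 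Neither observation is a genuine gap --- both are routine --- but including them would make the argument self-contained at the level of rigor the rest of your write-up maintains.
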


\begin{definition}
    A polynomial matrix $\G(z) \in \F_q[z]^{k \times n}$ is said to be delay-free if $\G(0)$ is full row rank. 
    A convolutional code $\mathcal{C}\subset\F_q[z]^n$ is called \textbf{delay-free} if its generator matrices are delay-free.
\end{definition}

It is easy to see that if one generator matrix of a convolutional code is delay-free, then all its generator matrices are delay-free and hence it makes sense to speak of delay-free convolutional codes. Moreover, note that by Theorem \ref{thm:charact_leftprime} all non-catastrophic convolutional codes are delay-free.

\begin{theorem}\cite{bookchapter}\label{thm:PC_existence}
    Let $\mathcal{C}$ be an $(n,k)$ convolutional code. Then, $\mathcal{C}$ admits a parity-check matrix $\mathbf{H}(z) \in \F_q[z]^{(n-k)\times n}$ if and only if any of the generator matrices of $\mathcal{C}$ is left prime.
\end{theorem}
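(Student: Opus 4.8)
The plan is to prove both implications after a reduction to a single generator matrix. Since any two generator matrices of $\mathcal{C}$ are related by a unimodular matrix (Remark~\ref{thm:unimod_generator}) and left primeness is preserved under left multiplication by unimodular matrices (by item~(5) of Theorem~\ref{thm:charact_leftprime}, as $\mathbf{U}(\lambda)$ is invertible for every $\lambda\in\bar{\F}_q$), it suffices to fix one generator matrix $\G(z)$ and show that $\G(z)$ is left prime if and only if $\mathcal{C}$ admits a parity-check matrix.

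For the direction assuming $\G(z)$ is left prime, I would invoke item~(4) of Theorem~\ref{thm:charact_leftprime} to complete $\G(z)$ to a unimodular matrix $\mathbf{T}(z)=\begin{pmatrix}\G(z)\\ \mathbf{E}(z)\end{pmatrix}\in\F_q[z]^{n\times n}$. Writing its inverse in block form $\mathbf{T}(z)^{-1}=\begin{pmatrix}\mathbf{M}(z) & \mathbf{N}(z)\end{pmatrix}$ with $\mathbf{N}(z)\in\F_q[z]^{n\times(n-k)}$, the identities $\mathbf{T}(z)\mathbf{T}(z)^{-1}=I_n$ and $\mathbf{T}(z)^{-1}\mathbf{T}(z)=I_n$ give $\G(z)\mathbf{N}(z)=0$ and $\mathbf{M}(z)\G(z)+\mathbf{N}(z)\mathbf{E}(z)=I_n$. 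I would then set $\boldH(z):=\mathbf{N}(z)^\top$, which has full row rank as a block of the unimodular matrix $\mathbf{T}(z)^{-1}$, and note $\boldH(z)\G(z)^\top=(\G(z)\mathbf{N}(z))^\top=0$, so $\mathcal{C}\subseteq\{\mathbf{c}\mid \boldH(z)\mathbf{c}^\top=0\}$. For the reverse inclusion, any $\mathbf{c}(z)$ with $\boldH(z)\mathbf{c}(z)^\top=0$ satisfies $\mathbf{c}(z)\mathbf{N}(z)=0$, whence $\mathbf{c}=\mathbf{c}(\mathbf{M}\G+\mathbf{N}\mathbf{E})=(\mathbf{c}\mathbf{M})\G\in\mathcal{C}$; hence $\boldH(z)$ is a parity-check matrix.

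For the converse, suppose $\boldH(z)$ is a parity-check matrix, so $\mathcal{C}=\{\mathbf{c}\mid \boldH(z)\mathbf{c}^\top=0\}$. The key structural observation is that this kernel is saturated: if $p(z)\mathbf{c}(z)\in\mathcal{C}$ for some nonzero $p(z)\in\F_q[z]$, then $p(z)\,\boldH(z)\mathbf{c}(z)^\top=0$, and since $\F_q[z]$ is an integral domain this forces $\boldH(z)\mathbf{c}(z)^\top=0$, i.e. $\mathbf{c}(z)\in\mathcal{C}$. Thus the quotient module $\F_q[z]^n/\mathcal{C}$ is torsion-free. I would then identify torsion-freeness of this quotient with left primeness of $\G(z)$: bringing $\G(z)$ to Smith normal form $\mathbf{P}(z)\G(z)\mathbf{Q}(z)=\begin{pmatrix}\mathbf{D}(z)&\mathbf{0}\end{pmatrix}$ with $\mathbf{P},\mathbf{Q}$ unimodular and $\mathbf{D}=\mathrm{diag}(d_1,\dots,d_k)$, the quotient becomes isomorphic to $\bigoplus_{i}\F_q[z]/(d_i)\oplus\F_q[z]^{n-k}$, which is torsion-free precisely when every $d_i$ is a unit, i.e. when the row Hermite form of $\G(z)$ equals $\begin{pmatrix}I_k&\mathbf{0}\end{pmatrix}$; by item~(2) of Theorem~\ref{thm:charact_leftprime} this is exactly left primeness. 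Equivalently, a non-unit $d_i$ has a root $\lambda_0\in\bar{\F}_q$ at which $\mathrm{rk}\,\G(\lambda_0)<k$, contradicting item~(5).

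The routine half is the left-prime-implies-parity-check direction, where completion to a unimodular matrix does all the work. I expect the main obstacle to be the converse, specifically the clean identification of saturation (torsion-freeness of $\F_q[z]^n/\mathcal{C}$) with left primeness; once the Smith normal form is invoked this is bookkeeping, but it is the step that genuinely uses that $\F_q[z]$ is a PID and links the module-theoretic condition to the matrix characterizations of Theorem~\ref{thm:charact_leftprime}.
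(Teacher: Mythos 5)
Your proof is correct in both directions; note, however, that the paper does not prove this theorem at all---it is quoted from \cite{bookchapter}---so the only comparison available is with the surrounding material. Your forward direction (left prime implies existence of a parity-check matrix) is in substance the construction the paper states without justification after the corollary: compute $\mathbf{V}(z)$ with $\G(z)\mathbf{V}(z)=\begin{pmatrix}I_k & \mathbf{0}\end{pmatrix}$ and take the last $n-k$ rows of $\mathbf{V}(z)^\top$; your $\mathbf{T}(z)^{-1}$ is exactly such a $\mathbf{V}(z)$, and your two-sided verification (using both $\mathbf{T}(z)\mathbf{T}(z)^{-1}=I_n$ and $\mathbf{T}(z)^{-1}\mathbf{T}(z)=I_n$) supplies the part the paper leaves implicit, namely the inclusion $\ker\boldH(z)\subseteq\mathcal{C}$, which is the nontrivial half. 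Your converse takes a genuinely different route from the one suggested by the paper's discussion of the non-left-prime case: there, one factors $\G(z)=\mathbf{L}(z)\mathbf{U}_1(z)$ with $\mathbf{U}_1(z)$ left prime and $\mathbf{L}(z)$ non-unimodular and observes $\mathcal{C}\subsetneqq\widetilde{\mathcal{C}}=\mathrm{rowsp}(\mathbf{U}_1(z))$; to close the converse along those lines one must additionally argue that \emph{any} candidate $\boldH'(z)$ with $\boldH'(z)\G(z)^\top=0$ also satisfies $\boldH'(z)\mathbf{U}_1(z)^\top=0$ (cancelling the nonsingular factor $\mathbf{L}(z)^\top$ over the fraction field), so that $\ker\boldH'(z)\supseteq\widetilde{\mathcal{C}}\supsetneqq\mathcal{C}$. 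Your saturation argument packages precisely this cancellation into a module-theoretic statement---$\F_q[z]^n/\mathcal{C}$ is torsion-free iff all invariant factors of $\G(z)$ are units iff $\G(z)$ is left prime via the Smith normal form---which buys a self-contained proof resting only on Theorem \ref{thm:charact_leftprime} and the PID structure of $\F_q[z]$, at the cost of being more abstract than the elementary factorization route. There are no gaps in your write-up.
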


\begin{corollary}
    Let $\G(z)\in\F_q[z]^{k\times n}$ be a left-prime generator matrix of a convolutional code $\mathcal{C}$. A matrix $\mathbf{H}(z)\in\F_q[z]^{(n-k)\times n}$ is a parity-check matrix for $\mathcal{C}$ if and only if $\mathbf{H}(z)\G(z)^\top=0$ and $\mathbf{H}(z)$ is full rank.
\end{corollary}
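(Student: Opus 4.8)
The plan is to mirror the block-code characterization of Lemma~\ref{gen_par}, but to take care that we now work over the principal ideal domain $\F_q[z]$ rather than a field, which is where the genuine content lies. The forward implication is immediate: if $\mathbf{H}(z)$ is a parity-check matrix then it is full rank by definition, and since each row $\mathbf{g}_i(z)$ of $\G(z)$ equals $\mathbf{e}_i\G(z)\in\mathcal{C}$, we get $\mathbf{H}(z)\mathbf{g}_i(z)^\top=0$ for every $i$, i.e. $\mathbf{H}(z)\G(z)^\top=0$.

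For the converse, assume $\mathbf{H}(z)\G(z)^\top=0$ and $\mathbf{H}(z)$ full rank, and set $\mathcal{D}:=\{\mathbf{c}(z)\ |\ \mathbf{H}(z)\mathbf{c}(z)^\top=0\}$. The inclusion $\mathcal{C}\subseteq\mathcal{D}$ follows by writing $\mathbf{c}(z)=\m(z)\G(z)$ and using $\mathbf{H}(z)\G(z)^\top=0$. The substance is the reverse inclusion, and the engine is left-primeness. By Theorem~\ref{thm:charact_leftprime}(4) I can complete $\G(z)$ to a unimodular $\mathbf{U}(z)=\begin{pmatrix}\G(z)\\ \mathbf{E}(z)\end{pmatrix}$; writing its inverse as $(\mathbf{L}(z)\mid\mathbf{R}(z))$ with $\mathbf{R}(z)\in\F_q[z]^{n\times(n-k)}$, the identities $\mathbf{U}(z)^{-1}\mathbf{U}(z)=\mathbf{U}(z)\mathbf{U}(z)^{-1}=I_n$ yield $\G(z)\mathbf{R}(z)=0$ and $\mathbf{L}(z)\G(z)+\mathbf{R}(z)\mathbf{E}(z)=I_n$.

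I first check that $\mathbf{H}_0(z):=\mathbf{R}(z)^\top$ is itself a parity-check matrix: it is full rank (as $n-k$ columns of the inverse of a unimodular matrix), it satisfies $\mathbf{H}_0(z)\G(z)^\top=0$, and the identity $\mathbf{L}\G+\mathbf{R}\mathbf{E}=I_n$ shows that any $\mathbf{c}$ with $\mathbf{c}\mathbf{R}=0$ obeys $\mathbf{c}=(\mathbf{c}\mathbf{L})\G\in\mathcal{C}$. The same identity shows the right kernel $\{\mathbf{y}\ |\ \G(z)\mathbf{y}=0\}$ is exactly the column span of $\mathbf{R}(z)$, because $\mathbf{y}=\mathbf{R}(\mathbf{E}\mathbf{y})$ whenever $\G\mathbf{y}=0$. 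Transposing $\mathbf{H}(z)\G(z)^\top=0$ to $\G(z)\mathbf{H}(z)^\top=0$ places every column of $\mathbf{H}(z)^\top$ in this kernel, so $\mathbf{H}(z)^\top=\mathbf{R}(z)\mathbf{W}(z)$ for some $\mathbf{W}(z)\in\F_q[z]^{(n-k)\times(n-k)}$, i.e. $\mathbf{H}(z)=\mathbf{T}(z)\mathbf{H}_0(z)$ with $\mathbf{T}(z)=\mathbf{W}(z)^\top$.

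To finish, I invoke the rank hypothesis: since $\mathbf{H}(z)=\mathbf{T}(z)\mathbf{H}_0(z)$ has rank $n-k$, the square matrix $\mathbf{T}(z)$ must be nonsingular over $\F_q(z)$. Hence for $\mathbf{c}\in\mathcal{D}$, from $\mathbf{T}(z)\big(\mathbf{H}_0(z)\mathbf{c}^\top\big)=0$ and invertibility of $\mathbf{T}(z)$ over the fraction field I conclude $\mathbf{H}_0(z)\mathbf{c}^\top=0$, so $\mathbf{c}\in\mathcal{C}$, giving $\mathcal{D}\subseteq\mathcal{C}$. I expect the main obstacle to be precisely this reverse inclusion: one must know that the solution module of $\mathbf{x}\G(z)^\top=0$ is \emph{generated} by the rows of the canonical parity-check matrix $\mathbf{H}_0(z)$ — so that $\mathbf{H}(z)$ factors through it — since over $\F_q[z]$ one cannot merely count dimensions as in the field setting of Lemma~\ref{gen_par}. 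The unimodular completion is exactly what makes this kernel computation clean, and left-primeness (via Theorem~\ref{thm:charact_leftprime}) is indispensable there.
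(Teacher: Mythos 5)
Your proof is correct, and every step checks out: the forward direction is immediate from the definition; the unimodular completion $\begin{pmatrix}\G(z)\\ \mathbf{E}(z)\end{pmatrix}$ with inverse $(\mathbf{L}(z)\mid\mathbf{R}(z))$ gives the identities $\G\mathbf{R}=0$ and $\mathbf{L}\G+\mathbf{R}\mathbf{E}=I_n$; these correctly establish that $\mathbf{H}_0=\mathbf{R}^\top$ is a parity-check matrix, that the right kernel of $\G(z)$ is the column span of $\mathbf{R}(z)$ (with the explicit witness $\mathbf{W}=\mathbf{E}\mathbf{H}^\top$), and finally the rank hypothesis forces $\mathbf{T}$ to be nonsingular over $\F_q(z)$, which collapses $\ker\mathbf{H}$ onto $\ker\mathbf{H}_0=\mathcal{C}$. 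For comparison: the paper states this corollary without any proof, presenting it as a consequence of Theorem \ref{thm:PC_existence} and the machinery cited from \cite{bookchapter}, so there is no in-paper argument to match yours against. What is worth noting is that your canonical matrix $\mathbf{H}_0$ is exactly the object the paper constructs in the paragraph following the corollary — the last $n-k$ rows of $\mathbf{V}(z)^\top$ where $\G(z)\mathbf{V}(z)=\begin{pmatrix}I_k & \mathbf{0}\end{pmatrix}$ — since that $\mathbf{V}(z)$ is precisely the inverse of a unimodular completion of $\G(z)$; you reach it through item (4) of Theorem \ref{thm:charact_leftprime} rather than through the row Hermite form (item (2)), but the two are the same construction. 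Your closing remark identifies the real content accurately: over $\F_q[z]$ one cannot count dimensions as in Lemma \ref{gen_par}, so one must know that the solution module of $\mathbf{x}\G(z)^\top=0$ is generated by the rows of $\mathbf{H}_0(z)$, and left-primeness is exactly what makes that kernel computation work. Your write-up thus supplies a complete proof of a statement the paper leaves to the reader.
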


Next we describe, how a parity-check matrix for a convolutional code can be calculated if it exists.

First the row Hermite form $\G_{rH}(z)$ of $\G(z)\in\F_q[z]^{k\times n}$ is computed. According to Theorem \ref{thm:charact_leftprime} there exists a parity-check matrix if and only if $\G_{rH}(z)= \begin{pmatrix} I_k &  \mathbf{0} \end{pmatrix}$. If this is true, calculate $\mathbf{V}(z)\in\F_q[z]^{n\times n}$ such that $\G(z)\mathbf{V}(z)=\begin{pmatrix}I_k & \mathbf{0} \end{pmatrix}$. Then, the last $n-k$ rows of $\mathbf{V}(z)^\top$ form a parity-check matrix for the convolutional code with generator matrix $\G(z)$.

If $\G(z)\in\F_q[z]^{k\times n}$ is not left prime, the corresponding convolutional code $\mathcal{C}$ does not posses a parity-check matrix. However, it is still possible to find a left prime polynomial matrix $\mathbf{H}(z)\in \mathbb F_q[z]^{(n-k)\times n}$ that plays the role of a parity check matrix for $\mathcal{C}$ in the sense that  
\[
\mathcal{C} \subsetneqq \ker \mathbf{H}(z).
\]
Since $\G(z)$ is not left prime, one has
$$
\G(z)= \arraycolsep=5pt
\begin{pmatrix}
\mathbf{L}(z) & \mathbf{0}
\end{pmatrix}\mathbf{U}(z)
$$
where 
$\begin{pmatrix}
\mathbf{L}(z) & \mathbf{0}
\end{pmatrix}$ is the row Hermite form of $\G(z)$ and $\mathbf{U}(z)\in\F_q[z]^{n\times n}$ unimodular and $\mathbf{L}(z) \in \mathbb F_q[z]^{k\times k}$ is not the identity matrix. It follows that $\G(z) = \mathbf{L}(z) \mathbf{U}_1(z) $, where $\mathbf{U}_1(z)$ consists of the first $k$ rows of $\mathbf{U}(z)$, i.e. $\mathbf{U}_1(z)$ is left prime by Theorem \ref{thm:charact_leftprime}. Hence, the convolutional code $\widetilde{\mathcal{C}}$ with generator matrix $\mathbf{U}_1(z)\in\mathbb{F}_q[z]^{k\times n}$ is non-catastrophic and has a parity-check matrix $\mathbf{H}(z)$. Finally, one has
\[
\mathcal{C} \subsetneqq \widetilde{\mathcal{C}} =\ker \mathbf{H}(z).
\]

In the following we introduce distance measures for convolutional codes.

\begin{definition}
The \textbf{weight} of a polynomial vector $\mathbf{c}(z){=}\displaystyle \sum_{i \in {\mathbb{Z}_{\geq 0}}} \mathbf{c}_i z^i \in \F_q[z]^n$ is defined as
$$
{\rm wt}(\mathbf{c}(z))= \sum_{i \in {\mathbb{Z}_{\geq 0}}} {\rm wt}(\mathbf{c}_i).
$$
\end{definition}

\begin{definition}
    Let $\cal C$ be an $(n,k)$ convolutional code. The \textbf{free distance} of $\cal C$ is defined as
    $$
    d_{free}({\cal C})=\min\{{\rm wt}(\mathbf{c}(z)) \, : \, \mathbf{c}(z) \in {\cal C}\backslash \{\mathbf{0}\}\}.
    $$
\end{definition}

The free distance is a measure for the total number of errors a convolutional code can correct. However, we will carry out the decoding of a convolutional code by splitting the received word and codeword into parts, so-called windows, and decode one window after the other. To this end, we need to introduce so-called sliding matrices.

 Let 
 $$\G(z)=\displaystyle \sum_{i \in {\mathbb{Z}_{\geq 0}}} \mathbf{G}_i z^i \in \F_q[z]^{k \times n}$$  be a generator matrix of a non-catastrophic $(n,k)$ convolutional code $\mathcal{C}$ and 
 $$\mathbf{H}(z)=\displaystyle \sum_{i \in {\mathbb{Z}_{\geq 0}}}  \mathbf{H}_iz^i\in\F_q[z]^{(n-k)\times n}$$ be a parity-check matrix, where $\mathbf{G}_i=\mathbf{0}$ if $i>\deg(\mathbf{G}(z))$ and $\mathbf{H}_i=\mathbf{0}$ if $i>\deg(\mathbf{H}(z))$. For $\gamma \in {\mathbb{Z}_{\geq 0}}$, consider the sliding generator-matrix
$$
\tilde{\mathbf{G}}^\gamma_0:=\begin{pmatrix}
     \mathbf{G}_0 & \mathbf{G}_1 & \cdots & \mathbf{G}_\gamma \\
      & \mathbf{G}_0 & \cdots & \mathbf{G}_{\gamma-1} \\
      & & \ddots & \vdots \\
      & & & \mathbf{G}_0
\end{pmatrix}
$$
and the sliding parity-check matrix
$$\tilde{\mathbf{H}}^\gamma _0:=\begin{pmatrix}
    \mathbf{H}_0 & & \\
    \vdots & \ddots & \\
    \mathbf{H}_\gamma & \cdots & \mathbf{H}_0
\end{pmatrix}.$$
Furthermore, for $i > 0$ we define
$$ \tilde{\mathbf{G}}^\gamma_i:=\begin{pmatrix}
     \mathbf{G}_{i(\gamma+1)} & \mathbf{G}_{i(\gamma+1) +1} & \cdots & \mathbf{G}_{i(\gamma+1)+\gamma} \\
      \mathbf{G}_{i(\gamma+1)-1}& \mathbf{G}_{i(\gamma+1)} & \cdots & \mathbf{G}_{i(\gamma+1)+(\gamma-1)} \\
      \vdots& \vdots & \ddots & \vdots \\
      \mathbf{G}_{i(\gamma+1)-\gamma}& \mathbf{G}_{i(\gamma+1)-(\gamma-1)} & \hdots & \mathbf{G}_{i(\gamma+1)}
\end{pmatrix},$$
where $\mathbf{G}_j = \mathbf{0}$ if $j > \deg(\mathbf{G}(z))$.

\begin{definition}

    Let $\cal C$ be an $(n,k)$ 
    convolutional code.
    For $\gamma \in{\mathbb{Z}_{\geq 0}}$, the \textbf{$\gamma$-th column distance} of $\mathcal{C}$ is defined as
\begin{align*}
d_\gamma^c(\mathcal{C}):&=\min\left\{\sum_{t=0}^\gamma {\rm wt}(\mathbf{c}_t)\ |\ \mathbf{c}(z)\in\mathcal{C}\ \text{and}\ \mathbf{c}_0\neq 0\right\}.
\end{align*}
If $\mathcal{C}$ is delay-free with generator matrix $\G(z)=\displaystyle \sum_{i \in {\mathbb{Z}_{\geq 0}}} \G_i z^i \in \F_q[z]^{k \times n}$, it follows
 $$
d^c_\gamma(\mathcal{C})  =  \min\{{\rm wt}((\mathbf{m}_0, \ldots, \mathbf{m}_\gamma) \tilde{\mathbf{G}}^\gamma_0) \, : \; \mathbf{m}(z) \in {\F_q[z]^k} \mbox{ with } \mathbf{m}_0 \neq 0\}.
    $$

If $\mathcal{C}$ is non-catastrophic with parity-check matrix $$\mathbf{H}(z)=\displaystyle \sum_{i \in {\mathbb{Z}_{\geq 0}}}  \mathbf{H}_iz^i\in\F_q[z]^{(n-k)\times n},$$ one obtains
$$d^c_\gamma(\mathcal{C})  = \min\left\{\sum_{t=0}^\gamma {\rm wt}(\mathbf{c}_t)\ |\ \tilde{\mathbf{H}}^\gamma_0 \begin{pmatrix}\mathbf{c}_0\\ \vdots\\ \mathbf{c}_\gamma\end{pmatrix}=\mathbf{0}\ \text{and}\ \mathbf{c}_0\neq \mathbf{0}\right\}.$$
    \end{definition}

One has that
$$
d^c_0(\mathcal{C}) \leq d^c_1(\mathcal{C}) \leq \cdots \leq \lim_{\gamma \rightarrow\infty} d^c_\gamma(\mathcal{C})\leq d_{free}({\cal C}),
$$
reflecting the fact that the larger we choose the decoding window, the more errors can be corrected inside this window.

We now write the equation $\mathbf{m}(z)\mathbf{G}(z)=\mathbf{c}(z)$ with $\ell:=\deg(\mathbf{m}(z))$ as

 \begin{align}\label{encoding}
 (\mathbf{m}_0, \ldots, \mathbf{m}_{\ell})
    \left(\begin{array}{cccccc}
        \mathbf{G}_0 & \cdots & \mathbf{G}_\mu  &      &        &\\
            & \mathbf{G}_0   & \cdots & \mathbf{G}_\mu &        &\\
            &       & \ddots &       & \ddots &\\
            &       &        & \mathbf{G}_0   & \cdots & \mathbf{G}_\mu\\
    \end{array}\right)=(\mathbf{c}_0, \ldots, \mathbf{c}_{\ell+\mu}).
    \end{align}

Assume we decoded the messages up to a time instant $t-1$, then we can use the following equation:  
 \begin{align*}\label{eq:GdecodSys}
 &(\mathbf{m}_0,\dots, \mathbf{m}_{t-1}\ |\ \mathbf{m}_t, \dots, \mathbf{m}_{t+\gamma})
    \left(\begin{array}{cccccc}
        \mathbf{G}_0 & \cdots &  \mathbf{G}_{t-1}  &  \mathbf{G}_{t}    &    \cdots     & \mathbf{G}_{t+\gamma}\\
                       &      \ddots & \vdots   & \vdots  &  & \vdots\\
            &       &   \mathbf{G}_0     & \mathbf{G}_1   & \cdots & \mathbf{G}_{\gamma+1}\\
            \hline
              &       &        & \mathbf{G}_0   & \cdots & \mathbf{G}_{\gamma}\\
              & & & & \ddots & \vdots\\
              & & & & & \mathbf{G}_0
    \end{array}\right) \\
    &=(\mathbf{c}_0, \ldots, \mathbf{c}_{t+\gamma}).
\end{align*}

and rewrite it as 

\begin{align*}
 &(\mathbf{m}_t, \dots, \mathbf{m}_{t+\gamma})
    \left(\begin{array}{ccc}
                   \mathbf{G}_0   & \cdots & \mathbf{G}_{\gamma}\\
               & \ddots & \vdots\\
               & & \mathbf{G}_0
    \end{array}\right) \\
    &=(\mathbf{c}_t, \ldots ,\mathbf{c}_{t+\gamma})- (\mathbf{m}_0,\dots, \mathbf{m}_{t-1})\begin{pmatrix} \mathbf{G}_{t}    &    \cdots     & \mathbf{G}_{t+\gamma}\\
                      \vdots  &  & \vdots\\
            \mathbf{G}_1   & \cdots & \mathbf{G}_{\gamma+1}\end{pmatrix}.
    \end{align*}
Denote the received word after transmission by $\mathbf{r}(z)=\sum_{i\in{\mathbb{Z}_{\geq 0}}}\mathbf{r}_iz^i\in\F_q[z]^n$.

Then,
$(\mathbf{m}_t, \dots, \mathbf{m}_{t+\gamma})$ can be recovered by decoding $$(\mathbf{r}_t, \ldots, \mathbf{r}_{t+\gamma})- (\mathbf{m}_0,\dots, \mathbf{m}_{t-1})\begin{pmatrix} \mathbf{G}_{t}    &    \cdots     & \mathbf{G}_{t+\gamma}\\
                      \vdots  &  & \vdots\\
            \mathbf{G}_1   & \cdots & \mathbf{G}_{\gamma+1}\end{pmatrix}
   $$ 
   in the block code with generator matrix $\tilde{\mathbf{G}}^\gamma_0$.

This process can be iterated by sliding the decoding window by $\gamma+1$ time steps to $(\mathbf{r}_{t+\gamma+1}, \ldots, \mathbf{r}_{t+2\gamma+1})$ and in this way the whole message $\textbf{m}(z)$ can be recovered with several decoding steps in the block code with generator matrix $\tilde{\mathbf{G}}^\gamma_0$.

\section{Basics of Information-Set Decoding}\label{sec: isdbasics}

Information-set decoding (ISD) is a generic decoding technique for linear codes. In this section we introduce the simplest form of ISD, the Prange algorithm \cite{Prange62}. Several improvements exist, such as \cite{LeeB88}, \cite{Stern88}, \cite{dumer1991minimum}, \cite{Peters10}, \cite{BernsteinLP11}, \cite{MayMT11}, and \cite{BeckerJMM12}.
\begin{definition}
\label{def:informationset}
    Let $C$ be an $[N, K]$ linear block code and  $\mathbf{G}$ be a generator matrix of $C$.
    For any subset $I \subseteq \lbrace 1, \ldots, N \rbrace$ we define $\mathbf{G}_I$ to be the submatrix of $\mathbf{G}$ with columns indexed by $I$.
    Then any subset $I \subseteq \lbrace 1, \ldots, N \rbrace$ of size $K$ such that $\mathbf{G}_I$ is invertible is called an \textbf{information set}.
\end{definition}
The general idea is as follows.
Given an erroneous codeword $\mathbf{c}+\mathbf{e}$, we wish to find an information set $I\subset\{1,\hdots,N\}$ such that $I \cap \supp(\mathbf{e}) = \emptyset$. This implies that $\mathbf{c}$ is uniquely determined by the positions indexed by $I$.
If we have done so, we can easily recover the codeword $\mathbf{c}$ with linear algebra, hence also the error.
In order to find such a set, we try random information sets, do the procedure described above and check if the weight of the recovered error is within specified bounds.
We continue until we have found an error of small enough weight.

More formally, we do the following in the Prange algorithm.
 Let $t$ be an upper bound for the number of errors, $\mathbf{G}$ be a generator matrix of the code, $\mathbf{r}$ the received word, $\mathbf{G}_I$ the submatrix of $\mathbf{G}$ with columns indexed by $I$ and $\mathbf{r}_I$ the vector consisting of the entries of $\mathbf{r}$ indexed by $I$.
The algorithm is as follows.
\begin{enumerate}[label=(\arabic*)]
    \item Pick a random information set $I$.
    \item Solve the equation $\mathbf{m} \mathbf{G}_I = \mathbf{r}_I$ for $\mathbf{m}$.
    \item Set $\mathbf{e} = \mathbf{r} - \mathbf{m} \mathbf{G}$.
    \item If ${\rm wt}(\mathbf{e}) \leq t$, output $\mathbf{e}$, else go back to step $1$.
\end{enumerate}

If $t$ is within the unique decoding radius, the algorithm succeeds if an information set $I$ is found such that $I$ has no intersection with the support of the error $\mathbf{e}$.

Assuming that every subset of size $K$ is an information set (which is only true if the code is maximum distance separable, i.e. the minimum distance is $d = N - K + 1$) and that there is a unique solution, we get for each time we do steps $(1), (2)$ and $(3)$, a probability of $\binom{N-K}{t}/{\binom{N}{t}}$ that $I \cap \supp(e) = \emptyset$.
The expected number of iterations, which we call the \textbf{workfactor}, is the reciprocal of this, denoted by
\[
    \mathrm{WF}_t=\frac{\binom{N}{t}}{\binom{N-K}{t}}.
\]

\section{ISD for Delay-Free Convolutional Codes}
\label{sect:isd-convolutional-codes}

In this section, we study ISD for delay-free convolutional codes (i.e. $\mathbf{G}(0)$ has full rank). For this, we rewrite a sliding generator matrix of $\mathbf{G}(z) = \sum_{i=0}^{\mu} \mathbf{G}_i z^i$ as

\begin{align}
\tilde{\mathbf{G}} = \begin{pmatrix}
\tilde{\mathbf G}^\gamma _0 & \tilde{\mathbf G}^\gamma _1 & \cdots & \tilde{\mathbf{G}}^\gamma _l & & &\\
 & \ddots & \ddots & \vdots & \ddots &  &\\
 & & \tilde{\mathbf{G}}^\gamma _0 & \tilde{\mathbf{G}}^\gamma _1 & \cdots & \tilde{\mathbf{G}}^\gamma _l \\
&  &  & \tilde{\mathbf G}^\gamma _0 &  \ddots & \vdots\\
 &  & &  & \ddots & \tilde{\mathbf{G}}^\gamma _1\\
 & &  & & & \tilde{\mathbf G}^\gamma_0\\
\end{pmatrix},
\end{align}
where $\tilde{\mathbf{G}}^\gamma _i$ has size $k (\gamma+1) \times n (\gamma+1)$.
The parameter $\gamma$ will be chosen in Section \ref{Subsec:choosing_gamma} to optimize the running time.
Note that when $\gamma=0$, we have $\tilde{\mathbf{G}}^\gamma _i = \mathbf{G}_i$, but $\tilde{\mathbf{G}}^\gamma _i$ can consist of several of the $\mathbf{G}_j$'s, for example, for $\gamma=2$, we have
$$\tilde{\mathbf{G}}^2 _0 =
\begin{pmatrix}
\mathbf G_0 & \mathbf G_1 &  \mathbf{G}_2 \\
 & \mathbf G_0 & \mathbf{G}_1 \\
 &  & \mathbf{G}_0 
\end{pmatrix},
$$
and
\[
    \tilde{\mathbf{G}}^2 _1 = 
    \begin{pmatrix}
    \mathbf G_3 & \mathbf G_4 & \mathbf{G}_5 \\
    \mathbf G_2 & \mathbf G_3 & \mathbf{G}_4 \\
    \mathbf G_1 & \mathbf G_2 & \mathbf{G}_3 
    \end{pmatrix}.
\]

We will also write the error vector as $\mathbf{e} = (\tilde{\mathbf{e}}_0, \tilde{\mathbf{e}}_1, \ldots, \tilde{\mathbf{e}}_{s-1})$, where $\tilde{\mathbf{e}}_i$ is of size $n(\gamma+1)$.

\subsection{Generic Decoding of Convolutional Codes}

Given the encrypted message $\mathbf{r}(z) = \mathbf{m}(z)\mathbf{G}(z) + \mathbf{e}(z)$, we aim to find $\mathbf{m}(z)\mathbf{G}(z)$ and consequently $\mathbf{m}(z)$. 

As before, we turn a polynomial vector $\mathbf{v}(z) = \sum_{i=0}^{d} \mathbf{v}_i z^i \in \mathbb{F}_q[z]^n$ into $\mathbf{v} = (\mathbf{v}_0, \mathbf{v}_1, \ldots, \mathbf{v}_{d}) \in \mathbb{F}_q^{nd}$. Let $\tilde{\mathbf{v}}_i$ be the concatenation of several consecutive $\mathbf{v}_i$'s. We get $\mathbf{v} = 
(\tilde{\mathbf{v}}_0, \tilde{\mathbf{v}}_1, \ldots, \tilde{\mathbf{v}}_{s-1}).$

The idea is to iteratively decode the convolutional code with the matrix $\tilde{\G}$ as follows:
\begin{enumerate}
    \item\textbf{Step 0:} Use generic decoding to recover $\tilde{\mathbf{e}}_0$ from $\tilde{\mathbf{r}}_ 0 = \tilde{\mathbf{m}}_0 \tilde{\mathbf{G}}^\gamma _0 + \tilde{\mathbf{e}}_0$ and recover $\tilde{\mathbf{m}}_0$ from $\tilde{\mathbf{m}}_0 \tilde{\mathbf{G}}^\gamma _0$ with linear algebra.
    \item\textbf{Step j for j=1,..., s-1:} \begin{enumerate}
        \item Compute $\tilde{\mathbf{m}}_j \tilde{\mathbf{G}}^\gamma _0 + \tilde{\mathbf{e}}_j = \tilde{\mathbf{r}}_j - \sum_{i=0}^{j-1} \tilde{\mathbf{m}}_i\tilde{\mathbf{G}}^\gamma _{j-i}$.
        \item Recover $\tilde{\mathbf{e}}_j$ and $\tilde{\mathbf{m}}_j$ with generic decoding for the code generated by $\tilde{\mathbf{G}}^\gamma _0$.
    \end{enumerate}
\end{enumerate}

For cryptographic purposes there exists only one error $\mathbf{e}(z)$ with at most a certain Hamming weight such that $\mathbf{r}(z)$ can be written as codeword plus error.
However, we might not find a unique solution $\tilde{\mathbf{e}}_j$ in each block.
So at each step, we produce a list of possible error vectors $( \tilde{\mathbf{e}}_{j,1}, \ldots, \tilde{\mathbf{e}}_{j,n_j} )$ where $\tilde{\mathbf{e}}_{j,\rho} \in \F_q^{N}$ for $\rho \in \{1, \ldots, n_j\}$ are the errors that we recovered using ISD in Step $j$.
We also get from the list of possible errors a list of possible messages $( \tilde{\mathbf{m}}_{j,1}, \ldots, \tilde{\mathbf{m}}_{j,n_j})$.
This gives us a tree, and we aim to find a branch that goes to the bottom of the tree. To find such a branch, we use a depth-first algorithm. In Section \ref{subsec:choice_depth_first}, we will first describe this procedure and then justify the choice of a depth-first algorithm over a breadth-first algorithm.
As a generic decoding method, we use information set decoding, but in principle, any generic decoding method can be used.

Note further that one can also use overlapping parts of the received word for decoding and use a consistency check on the recovered messages to limit the number of possibilities in the search.
For our experiments this did not give huge speedups, but it might be useful if one uses different codes.

\subsection{Using ISD for Convolutional Codes}

We want to use information set decoding (ISD) as a generic method of decoding. Let $t$ be the expected weight of an error block $\tilde{\mathbf{e}}_i$, rounded up to the next integer. While we expect $t$ errors in each block, the chance of each block having at most $t$ errors can be very small.
So we introduce a variable $\varepsilon$ and allow up to $t + \varepsilon$ errors in each block. Note that $\varepsilon$ can be chosen so that an arbitrarily high percentage of the errors satisfies this condition, but increasing $\varepsilon$ increases the number of solutions we obtain at each step and thus increases computational time. We will discuss how to choose $\varepsilon$ in Section \ref{sec: probpoly}.

Note that the first solution we find might not be the correct one, or, if the algorithm is currently running on a ``wrong'' branch, no solution may be found. So, aborting the ISD search after an error is found is not an option. Our solution to this issue is to simply run the algorithm for a fixed amount of iterations and collect all solutions. The number of iterations can be chosen such that it is almost guaranteed that the correct error will be in the list of outputs at every step as we will explain later. Thus, if no solutions are found, we expect to be on a wrong branch.

\subsubsection{The Choice of a Depth-First Algorithm}
\label{subsec:choice_depth_first}

Depth-first algorithms, in general, have lower memory usage than breadth-first algorithms. However, our main reason for choosing a depth-first algorithm is that for our purposes, we expect a depth-first algorithm to run faster than a breadth-first algorithm. The reasons for this we shall explain in this subsection.

Depth-first search works as follows.
We start with a tree and always move down the left-most possible branch.
Once we cannot move down any further we backtrack and then move down the left-most branch we have not explored yet.
Note that we do not have to store what we have explored already if we store in an appropriate way which branches we have yet to explore.

In Figure \ref{fig1}, we labeled the nodes in the tree according to the sequence in which we encounter them in a depth first search.
To be precise we go from $1$ to $2$, $3$, $4$.
Then we backtrack to $3$, from which we go to $5$.
After this we backtrack all the way to $1$ from where we go to $6$ and traverse the remaining graph.

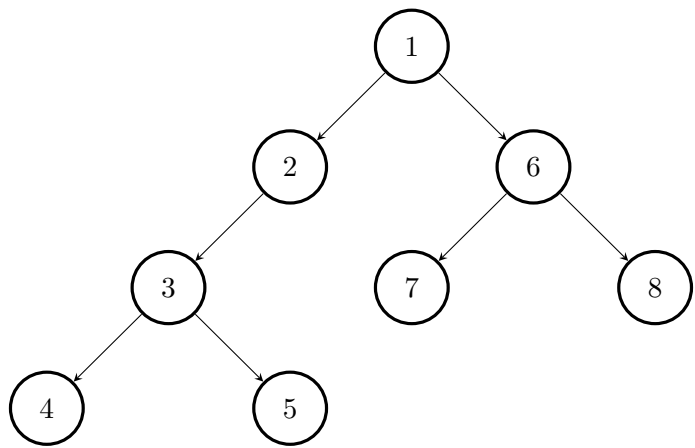
\begin{figure}[ht]
    \centering
    \tikzstyle{node_style}=[shape=circle, draw=black, align=center, scale=1]
    \tikzstyle{edge}=[->,>=stealth, scale=1]
    \tikzstyle{node}=[state, very thick]
    \begin{tikzpicture}[scale=0.8]
        \node[node] (s_8) at (4,-4) {$8$};
        \node[node] (s_7) at (0,-4) {$7$};
        \node[node] (s_6) at (2,-2) {$6$}
            edge[edge]  node[] {} (s_7)
            edge[edge]  node[] {} (s_8);
        \node[node] (s_5) at (-2,-6) {$5$};
        \node[node] (s_4) at (-6,-6) {$4$};
        \node[node] (s_3) at (-4,-4) {$3$}
            edge[edge]  node[] {} (s_4)
            edge[edge]  node[] {} (s_5);
        \node[node] (s_2) at (-2,-2) {$2$}
            edge[edge]  node[] {} (s_3);
        \node[node] (s_1) at (0,0) {$1$}
            edge[edge]  node[] {} (s_2)
            edge[edge]  node[] {} (s_6);
            
    \end{tikzpicture}
    \caption{Depth-First Search.}
    \label{fig1}
\end{figure}

We only worked with the most basic instance of ISD, the Prange algorithm. Recall from Section \ref{sec: isdbasics} that, after a choice of an information set $I$, Prange succeeds in recovering an error if the support of the error is contained in the complement of $I$.
This is more likely to happen for errors of smaller weight, so errors of smaller weight have a higher probability of getting found first by the algorithm than errors of higher weight.

We allow up to $t+\varepsilon$ errors in each window. However, in most cases, the correct error vector will have a smaller weight. Indeed, the average weight in each block is $t$, and the probability of the weight exceeding $t$ by a significant amount is low, as explained in Section \ref{sec:prob}.
This means that the correct error has a tendency to appear at one of the first positions of the output lists of the ISD search at each step. So in most steps, the depth-first algorithm will immediately proceed with the correct error.

Furthermore we heuristically expect a kind of avalanche effect or diffusion, meaning that if we have several wrong messages the error will spread and there will be no close codewords to the received word.
Hence the algorithm should not go too deep in the depth first search for wrong errors and messages.

Our observations from the experiments are in line with these expectations.

\subsubsection{Low Weight Codewords}

Let $\G(z)=\sum \G_iz^i\in\F_q[z]^{k\times n}$ and $\mathbf{H}(z)=\sum \mathbf{H}_iz^i\in\F_q[z]^{(n-k)\times n}$, with $\mathbf{H}_0$ full rank, be a generator matrix and a parity-check matrix, respectively, of the same non-catastrophic convolutional code $\mathcal{C}$. Note that each non-catastrophic convolutional code possesses a left-prime parity-check matrix $\mathbf{H}(z)$, and hence, in particular a parity-check matrix $\mathbf{H}(z)$ such that $\mathbf{H}_0$ is full rank. Then, for $\gamma\in{\mathbb{Z}_{\geq 0}}$, it is an easy consequence of Lemma \ref{gen_par} that $\tilde{\mathbf{H}}^\gamma_0$ is a parity-check matrix for the block code with generator matrix $\tilde{\mathbf{G}}^\gamma_0$.

Note that the block code with generator matrix $\tilde{\mathbf{G}}^\gamma_0$ (like each block code) always possesses a parity-check even if the corresponding convolutional code is catastrophic.
Moreover, it is important to note that, due to the restriction $\mathbf{v}_0\neq \mathbf{0}$, respectively $\mathbf{u}_0\neq \mathbf{0}$, in the definition of column distances, the distance of the block code with generator matrix $\tilde{\mathbf{G}}^\gamma_0$ is upper bounded by $d_\gamma^c$ but can be much smaller.
While a convolutional code $\mathcal{C}$ with generator matrix $\mathbf{G}(z)$ may have a large minimum distance, the linear code generated by the matrix 
$$\tilde{\mathbf{G}}^\gamma _0 = \begin{pmatrix}
\mathbf G_0 & \mathbf G_1 &  \cdots & \mathbf{G}_{\gamma} \\
 & \mathbf G_0 & \ddots & \vdots \\
 &  & \ddots & \mathbf{G}_1 \\
 & & & \mathbf{G}_0
\end{pmatrix}
$$
can have codewords of small weight.
Note that the code generated by $\tilde{\mathbf{G}}^\gamma _0$ contains codewords of the form $(
    \mathbf{0}, \ldots, \mathbf{0} , \mathbf{c} ),$ where $\mathbf{c}$ lies in $\text{rowspan}(\mathbf{G}_0)$.
Note further that if $\mathcal{C}$ is delay-free (that is, $\mathbf{G}_0$ has full rank), every non-zero codeword of $\text{rowspan}(\tilde{\mathbf{G}}^\gamma _0)$ is of the form 
$$(\mathbf{0}, \ldots, \mathbf{0}, \mathbf{c} , \mathbf{r}_1, \ldots, \mathbf{r}_i
)$$ for a non-zero $\mathbf{c} \in \text{rowspan}(\mathbf{G}_0)$. So the minimum distance of the code generated by $\tilde{\mathbf{G}}^\gamma _0$ equals the one of the code generated by $\mathbf{G}_0$.

Actually, if one has a message of the form $\mathbf{m}=(\mathbf{m}_0,\hdots,\mathbf{m}_{\gamma})$ where $i\in\{0,\hdots,\gamma\}$ is minimal such that $\mathbf{m}_i\neq \mathbf{0}$, then the weight of $\mathbf{c}=(\mathbf{c}_0,\hdots,\mathbf{c}_{\gamma})=\mathbf{m}\tilde{\mathbf{G}}_0$ is lower bounded by $d^c_{\gamma-i}(\mathcal{C})$ and $\mathbf{c}_0=\cdots=\mathbf{c}_{i-1}=\mathbf{0}$, $\mathbf{c}_i\neq \mathbf{0}$.

\subsubsection{Undetectable Errors}

Since we are working with Prange's algorithm, errors can't be detected if the support of a codeword is contained in the support of the error. This is formalized in the following proposition, which states that an information set $I$ of a a linear code $C$ over $\mathbb{F}_q$ must intersect the support of each codeword.

\begin{proposition}
    Let $I$ be an information set for an $[N,K]$-code $C$ and $\mathbf{c} \in C \setminus \{ 0\}$. Then $ \mathrm{supp}(\mathbf{c}) \cap I \neq \emptyset$. 
\end{proposition}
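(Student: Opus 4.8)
The plan is to argue by contradiction, reducing the claim to the invertibility of the matrix $\mathbf{G}_I$ that defines an information set. I would fix a generator matrix $\mathbf{G} \in \mathbb{F}_q^{K \times N}$ of $C$, so that by Definition~\ref{def:informationset} the submatrix $\mathbf{G}_I$ consisting of the columns indexed by $I$ is invertible. Suppose, towards a contradiction, that $\mathrm{supp}(\mathbf{c}) \cap I = \emptyset$; this is precisely the statement that $\mathbf{c}$ vanishes on every coordinate in $I$, i.e. $\mathbf{c}_I = \mathbf{0}$, where $\mathbf{c}_I$ denotes the vector formed by the entries of $\mathbf{c}$ indexed by $I$.

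Since $\mathbf{c} \in C$, I would write $\mathbf{c} = \mathbf{m}\mathbf{G}$ for some $\mathbf{m} \in \mathbb{F}_q^K$. Restricting to the columns indexed by $I$ gives $\mathbf{c}_I = \mathbf{m}\mathbf{G}_I$, and combining this with $\mathbf{c}_I = \mathbf{0}$ yields $\mathbf{m}\mathbf{G}_I = \mathbf{0}$. Multiplying on the right by $\mathbf{G}_I^{-1}$, which exists because $I$ is an information set, then forces $\mathbf{m} = \mathbf{0}$, whence $\mathbf{c} = \mathbf{m}\mathbf{G} = \mathbf{0}$. This contradicts the hypothesis $\mathbf{c} \in C \setminus \{0\}$, so $\mathrm{supp}(\mathbf{c}) \cap I \neq \emptyset$.

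There is essentially no hard step here: the entire content is the observation that a codeword is uniquely determined by its restriction to an information set, so the only codeword vanishing on all of $I$ is the zero codeword. The one point I would state carefully is the identity $(\mathbf{m}\mathbf{G})_I = \mathbf{m}\mathbf{G}_I$, namely that selecting the columns indexed by $I$ commutes with left multiplication by $\mathbf{m}$. This is immediate from the column-wise description of matrix multiplication, but it is the step that lets me pass from the support condition to a homogeneous linear system in $\mathbf{m}$, at which point invertibility of $\mathbf{G}_I$ closes the argument.
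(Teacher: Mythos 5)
Your proof is correct, and it takes a route that differs from the paper's in a small but interesting way. The paper argues via a change of generator matrix: given a nonzero codeword $\mathbf{c}$ whose support misses $I$, it completes $\mathbf{c}$ to a basis of $C$, arranges that basis into a generator matrix whose last row is $\mathbf{c}$, and observes that the submatrix formed by the columns indexed by $I$ then has a zero last row, hence is singular, so $I$ is not an information set. That argument silently uses the fact that the information-set property does not depend on the choice of generator matrix (true, because two generator matrices differ by an invertible left factor $\mathbf{U}$, and $(\mathbf{U}\mathbf{G})_I = \mathbf{U}\mathbf{G}_I$ is singular if and only if $\mathbf{G}_I$ is, but the paper never states this). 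Your argument sidesteps that issue entirely: you keep the generator matrix with respect to which $I$ is an information set, write $\mathbf{c} = \mathbf{m}\mathbf{G}$, restrict to the columns in $I$ via the identity $(\mathbf{m}\mathbf{G})_I = \mathbf{m}\mathbf{G}_I$ to get $\mathbf{m}\mathbf{G}_I = \mathbf{0}$, and use invertibility of $\mathbf{G}_I$ to force $\mathbf{m} = \mathbf{0}$ and hence $\mathbf{c} = \mathbf{0}$, contradicting $\mathbf{c} \neq \mathbf{0}$. Both are one-step linear algebra, but yours is more literally faithful to Definition~\ref{def:informationset} (which is stated relative to a fixed generator matrix) and is self-contained, whereas the paper's basis-completion argument is slightly more geometric at the cost of an unstated well-definedness lemma; it also exhibits the underlying principle your last paragraph states, namely that a codeword is determined by its coordinates on an information set.
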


\begin{proof}
    Let $\mathbf{c}$ be in $C \setminus \{ 0 \}$ and $I \subset \{ 1,\ldots, N \}$ a set of size $K$ that has trivial intersection with the support of $\mathbf{c}$.
    Pick a basis of $C$ that contains $\mathbf{c}$ and arrange them in a generator matrix $\mathbf{G}$ such that the last row is $\mathbf{c}$. The last row of the matrix $\mathbf{G}_I$, consisting of the columns indexed by $I$, is a zero row, so $\mathbf{G}_I$ is not invertible. This means that $I$ is not an information set.
\end{proof}

Note that in these cases, the error $\mathbf{e}$ can be written as $\mathbf{e} = \mathbf{e'} + \mathbf{c}$, where $\mathbf{c}$ is a codeword whose support is contained in the support of $\mathbf{e}$ and $\mathbf{e'}$ is an error which can be recovered by the Prange algorithm. So we can ensure that $\mathbf{e}$ is in our list of outputs by pre-computing a list of low-weight codewords (for our experiments codewords of weight up to 2 were sufficient for most cases) and then, for all possible errors $\mathbf{e'}$ in the output of the Prange algorithm and low-weight codewords $\mathbf{c}$, appending $\mathbf{e'} + \mathbf{c}$ at the end of the list of outputs if $\wt( \mathbf{e'} + \mathbf{c} ) \leq t+\varepsilon$. It is important that the new solutions get appended at the end of the list when working in a setting where it is unlikely to happen that the support of an error contains the support of a codeword.

\subsubsection{The Choice of $\gamma$}
\label{Subsec:choosing_gamma}

Recall that $\gamma+1$ is the number of distinct $\mathbf{G}_i$'s in our matrix $\tilde{\mathbf{G}}_0^\gamma$ which we use for iterative decoding (see beginning of Section \ref{sect:isd-convolutional-codes}). Note that the choice of $\gamma$ can heavily impact the decoding complexity: choosing a large $\gamma$ will increase the ISD cost in each step, while choosing a small $\gamma$ increases the number of steps. This could potentially substantially increase the nodes the algorithm will transverse in the depth-first search. Thus, $\gamma$ needs to be carefully chosen depending on the parameters of the cryptosystem. In fact, if we choose $\gamma$ to be the maximum of the memory of the generator matrix and the degree of the error vector $\mathbf{e}(z)$, then we are in the setting of decoding a single block code. On the other hand, a small $\gamma$ might also not be suitable for decoding. Consider the case
$$\mathbf{G}(z) = (g_1(z), g_2(z)),$$
where $g_1(z), g_2(z) \in \mathbb{F}_2[z]$ are of large degree. If we choose $\gamma = 0$, that is, $\tilde{\mathbf{G}}_0^\gamma$ consists of the constant terms of $g_1(z)$ and $g_2(z)$, then only a small percentage of errors will have at most $1$ error in each block. But if we decode with at most $2$ errors per block, then we get $4$ solutions in each step, at which point we are essentially just brute-forcing all solutions.

We have not found a rigorous way of choosing $\gamma$ optimally and have resorted to experimentation.

\subsection{ISD for a Fixed Amount of Errors Uniformly Distributed Over The Error Vector} \label{sec:prob}

Our goal in this section is to reduce the information-set decoding of a convolutional code to the ISD of equations of the form

$$ \mathbf{m}_i\tilde{\mathbf{G}^\gamma _0} + \tilde{\mathbf{e}_i},$$
or equivalently
$$ \tilde{\mathbf{H}}^{\gamma}_0 \tilde{\mathbf{e}}^T_i=\mathbf{s}_i.$$

The analysis consists of several steps. We first want to compute the probability that blocks of the error contain at most $t+\varepsilon$ non-zero entries for a given $\varepsilon$.

We will assume the following: $\mathcal{C}$ is a delay-free convolutional code, and the error vector $\mathbf{e}$ of length $s N$ is such that the errors are uniformly distributed over its length.

\subsubsection{Exceedence Probability for the Error Distributions}\label{sec: probpoly}

Let $t_e$ be the total error weight, $N$ the length of the code generated by $\tilde{\mathbf{G}}^\gamma _0$, $s$ the number of blocks of size $N$ we need for decoding, $t = \lceil {t_e}/{s} \rceil$, the (rounded) expected number of errors in each size $N$ window, and $\varepsilon$ the tolerance for the error weights.

To give the exact probability of an error having at most weight $t+\varepsilon$ in each window of size $N$, we consider the polynomial$$p(z) = 1 + (q-1)\binom{N}{1} z +(q-1)^2 \binom{N}{2}z^2 + \ldots + (q-1)^{t+\varepsilon}\binom{N}{t+\varepsilon} z^{t + \varepsilon}.$$
We compute
$$q(z) = p(z)^{s}$$
and recover the $t_e$'th coefficient of $q(z),$ which gives us the number of error vectors of weight $t_e$ with at most $t+\varepsilon$ errors in each block of size $N$. Dividing by the total number of weight $t_e$ error vectors, which is
$$(q-1)^{t_e}\binom{s N}{t_e},$$
yields the probability of having at most $t+\varepsilon$ errors in each block.
One can reformulate this in terms of a multivariate hypergeometric distribution.
More information about multivariate hypergeometric distributions can be found in \cite{johnson1997discrete}.
Note that in each block we get a hypergeometric distribution, which we use for an estimate of the tails in the next section. 

\subsubsection{Estimates, Hypergeometric Distribution}
The hypergeometric distribution has an exponentially decaying tail; see \cite{chvatal1979tail}.
Let us consider the tail of a hypergeometric distribution given by
\[
    H(A, B, b, l) = \binom{B}{b}^{-1} \sum_{i = l}^b \binom{A}{i} \binom{B-A}{b-i}.
\]
Define $\alpha$ through the equation
\[
    l = \left(\frac{A}{B} + \alpha\right) b.
\]
where $\alpha$ should be greater than $0$.
One can show as in \cite{chvatal1979tail} that we get for the tail
\[
    H(A, B, b, l) \leq e^{-2\alpha^2 b}.
\]
In the setting above we have $A = N, B = N s, b = t s$.
Hence
\[
    \frac{A}{B} = \frac{1}{s},
\]
and 
\[
    l = t + \alpha t s.
\]

Thus, if $\varepsilon = \alpha t s$, then we get that the probability of having at least $t + \varepsilon$ errors in a block is at most
\[
    e^{-2 \alpha^2 t s} = e^{-2 \alpha \varepsilon}.
\]

For the error weight of the block to exceed $t+\varepsilon$, we set $l=t+\varepsilon+1$, getting $\alpha=(\varepsilon+1)/st.$ Then the probability of the block having error weight more than $t+\varepsilon$ is at most
\[
    e^{-2\alpha^2 t s} = e^{-2\alpha (\varepsilon+1)}.
\]
By the Union bound, we have:
\begin{align*}
    \mathbb{P}[\operatorname{max} \{ \operatorname{wt}(\tilde{\mathbf{e}}_j), j=0, \cdots, s-1\} > t+\varepsilon] &\leq s H(N, N s, t s, t+\varepsilon+1)\\
    &\leq s e^{-2\alpha^2 t s} = s e^{-2\alpha (\varepsilon+1)}
\end{align*}
where $\tilde{\mathbf{e}}_j$ is the $j$-th error coefficient.

\subsection{Estimates on The Number of Solutions When the Amount of Errors Exceeds the Unique Decoding Radius}
Given a (random) full-rank matrix $\mathbf{H} \in \mathbb{F}_q^{(N-K) \times N}$, an error vector $\mathbf{e} \in \mathbb{F}_q^N$ of weight $t$ and $\mathbf{s} \in \mathbb{F}_q^{N-K}$ we want to compute the expected amount of solutions of the equation
\begin{equation}\label{eqn: pceqn}
\mathbf{He}^T = \mathbf{s}\end{equation}
of weight at most $t+\varepsilon$. We care about the conditional expectation
\begin{align}\label{eq:expected solutions}
&\mathbb{E}[ \#\text{solutions with at most } t+\varepsilon \text{ errors} \: | \: \text{at least one such solution exists} ] \nonumber \\
&\approx 1 + \mathbb{E}[ \#\text{solutions with at most } t+\varepsilon \text{ errors}].
\end{align}

Note that every solution to Equation \eqref{eqn: pceqn} lives in 
$$S := \mathbf{e} + \mathrm{ker}(\mathbf{H}),$$ 
which has $q^{K}$ elements. 
Notice that if $\mathbf{H}$ is ``random'', then we expect this set to be uniformly distributed in $\mathbb{F}_q^N$. So we expect the number of vectors of weight $\tilde{t}$ in $S$ to be
$$\frac{(q-1)^{\tilde{t}}}{q^{N-K}} \binom{N}{\tilde{t}}.$$
Summing over all $\tilde{t} \leq t+\varepsilon$ yields the estimate for (\ref{eq:expected solutions}).

\subsubsection{Choosing Maximum Number of Iterations} \label{wf}

The work factor of the Prange algorithm for an $[N,K]-$code with an error vector of weight $t$ is given by
$$
\mathrm{WF}_t=\frac{\binom{N}{t}}{\binom{N-K}{t}}.
$$
Similarly, if we allow $t + \varepsilon$ errors with $\varepsilon \geq 1$ the work factor changes to
$$
\mathrm{WF}_{t+\varepsilon}=\frac{\binom{N}{t + \varepsilon}}{\binom{N-K}{t + \varepsilon}}.
$$
The quotient of the two work factors simplifies to 
$$
\frac{\mathrm{WF}_{t+\varepsilon}}{\mathrm{WF}_t} = \displaystyle \prod_{j=0}^{\varepsilon-1} \frac{N-t-j}{N-K-t-j} \approx \left({\frac{N-t}{N-K-t}}\right)^\varepsilon.
$$

Note furthermore that while $\mathrm{WF}_{t+\varepsilon}$ is the expected number of iterations to find a solution of weight $t+\varepsilon$, it is unlikely that we will find the correct error at each step with this number of iterations. To ensure that the correct error is in the list of outputs of our ISD algorithm, we run it for a fixed number of iterations, chosen to ensure this.
To make this choice, note that the chance of finding a fixed detectable error of weight $t+\varepsilon$ in one iteration of Prange's algorithm is given by
$$\text{WF}_{t+\varepsilon}^{-1} = \frac{\binom{N-K}{t+\varepsilon}}{\binom{N}{t+\varepsilon}}.$$

So, the probability of finding a given error of weight $t+\varepsilon$ at all $s$ steps with $W$ ISD iterations at each step is given by

$$ \left(1-\left(1-\frac{\binom{N-K}{t+\varepsilon}}{\binom{N}{t+\varepsilon}}\right)^{W}\right)^s.$$

Note that this is a lower bound for the success probability of recovering an error vector $\mathbf{e} = (\tilde{\mathbf{e}}_0, \tilde{\mathbf{e}}_1, \ldots, \tilde{\mathbf{e}}_{s-1})$ with $\mathrm{wt}(\tilde{\mathbf{e}}_i) \leq t+ \varepsilon$ for all $i=1,\ldots, s-1$ since the weight of an $\tilde{\mathbf{e}}_i$ can be smaller than $t+\varepsilon$, which increases the probability of it getting recovered.

\subsubsection{Choice of $\varepsilon$}\label{sec: epsilon}
In this subsection we summarize the impact of the choice of $\varepsilon$.
Increasing $\varepsilon$ makes it more likely that errors satisfy the weight condition on each block and thus allows to recover more errors and messages. However, it also increases the work factor for ISD and the number of solutions found in each step, which can heavily increase the computational time. 

All of these factors need to be carefully considered when choosing $\varepsilon$.

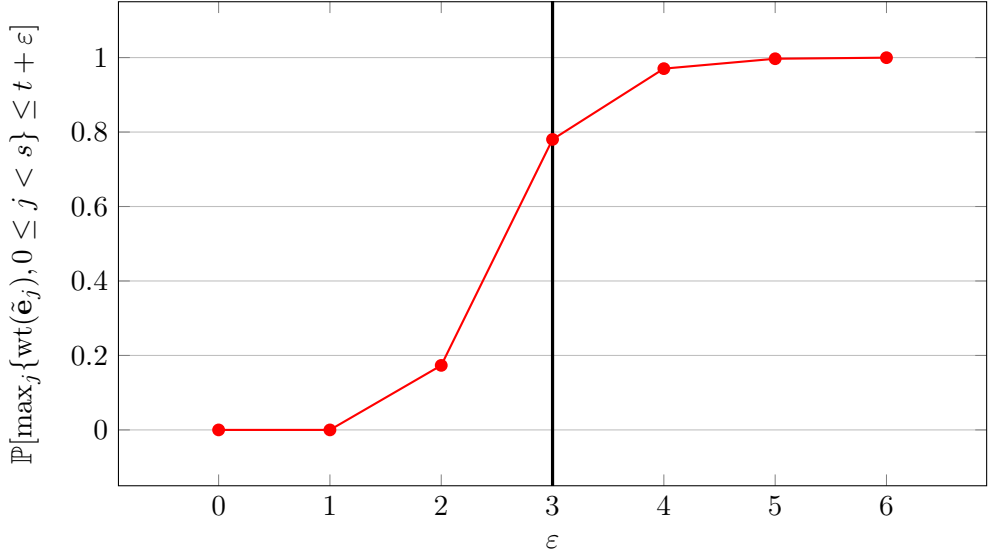
\begin{figure}
  \centering
\begin{tikzpicture}
    \begin{axis}[
        title={Probability that every block has weight at most $t+\varepsilon$ vs. $\varepsilon$},
        xlabel={$\varepsilon$},
        ylabel={$ \mathbb{P}[\operatorname{max} \{ \operatorname{wt}(\tilde{\mathbf{e}}_j), 0 \leq j <s\} \leq t+\varepsilon]$},
        width=13cm,
        height=8cm,
        symbolic x coords={0,
1,
2,
3,
4,
5,
6},
        xtick=data,
        ymin=0, ymax=1,
        legend pos=north east,
        ymajorgrids=true, 
        enlargelimits=0.15,
        legend style={font=\footnotesize},
        ymode=linear,
        every axis plot/.append style={ultra thick}
    ]
    
    \addplot[
        color=red,
        mark=*,
        mark options={fill=red},
        thick,
        nodes near coords={}, 
        every node near coord/.append style={}
    ] coordinates {
        (0, 2.41848981004715e-39)
(1, 0.0000120977983264597)
(2, 0.173289806743814)
(3, 0.780213064712423)
(4, 0.970303327681542)
(5, 0.996893617990054)
(6, 0.999724740527447)

    };
    \addplot[very thick, samples=50, smooth,domain=0:6,black] coordinates {(3,-0.2)(3,1.2)};

    \end{axis}
\end{tikzpicture}

  \caption{Calculating a suitable value of $\varepsilon$ for \cite{bolkema2017variations}, Example 5.11.}
  \label{graphicprob}
\end{figure}

For example, for the attack on \cite{bolkema2017variations} that we will discuss in Section \ref{exp1}, we have the parameters $q=2$, $t_e=140$, $s=167$, $K = 36$ and $N=60.$ This gives us $t=1$, and we get the plot in Figure \ref{graphicprob}, which helps us to choose the value of $\varepsilon$. 
We chose $\varepsilon=3$, as about $78\%$ of errors should have block-weight at most $4$, as can be seen in Figure \ref{graphicprob}, and the work factor is still comparatively small as discussed in Section \ref{wf}, and as can be seen in Figure \ref{graphicwf}. We use this value of $\varepsilon$ for our experiments, as described in Section \ref{exp1}.

\begin{figure}
    \centering
    \begin{tikzpicture}
    \begin{axis}[
        title={Ratio of work factors vs. $\varepsilon$},
        xlabel={$\varepsilon$},
        ylabel={$ {\mathrm{WF}_{t+\varepsilon}}/{\mathrm{WF}_t}$},
        width=13cm,
        height=8cm,
        symbolic x coords={0,
1,
2,
3,
4,
5,
6},
        xtick=data,
        ymin=0, ymax=500,
        legend pos=north east,
        ymajorgrids=true, 
        enlargelimits=0.15,
        legend style={font=\footnotesize},
        ymode=linear,
        every axis plot/.append style={ultra thick}
    ]
    
    \addplot[
        color=red,
        mark=*,
        mark options={fill=red},
        thick,
        nodes near coords={}, 
        every node near coord/.append style={}
    ] coordinates {
        (0, 1.0)
(1, 2.5652173913043477)
(2, 6.762845849802371)
(3, 18.35629587803501)
(4, 51.397628458498026)
(5, 148.7826086956522)
(6, 446.3478260869565)

    };
     \addplot[very thick, samples=50, smooth,domain=0:6,black] coordinates {(3,-100)(3,600)};

    \end{axis}
\end{tikzpicture}

    \caption{${\mathrm{WF}_{t+\varepsilon}}/{\mathrm{WF}_t}$ for Prange on the $[60, 36]-$code, as we vary $\varepsilon$. }
    \label{graphicwf}
\end{figure}
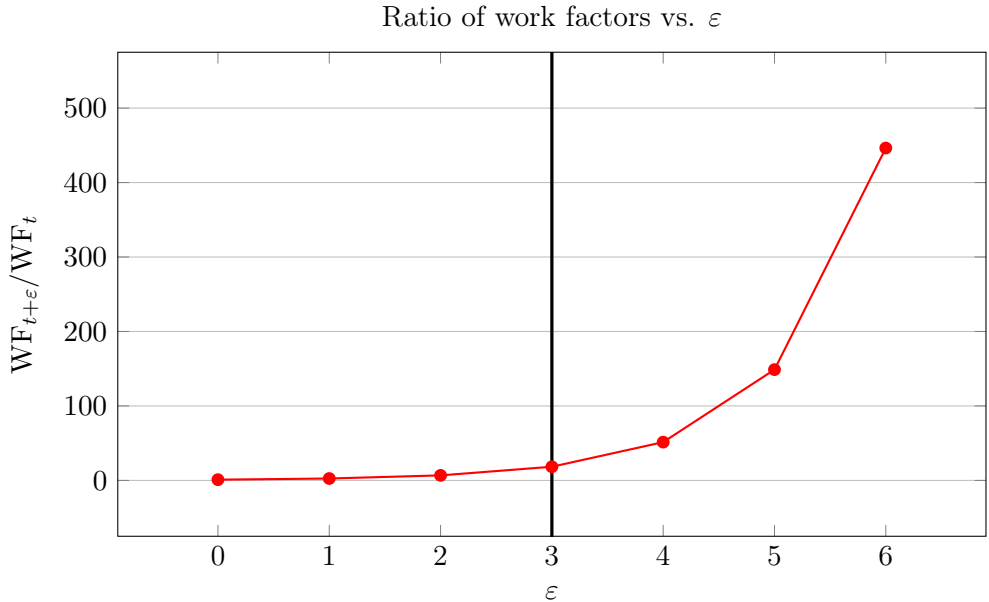

\section{Attacks on McEliece Type Cryptosystems Based on Convolutional Codes}
\label{sec:experiments}

In this section we describe how to attack two McEliece type cryptosystems and what modifications of the procedures described above are necessary in order to attack them in practice.
The systems we attack are from \cite[Section 5]{bolkema2017variations} and \cite{almeidaBNS21} (note that the latter one is not the authors' current version).
Next, we will describe the McEliece system and the philosophy of how the cryptosystems use convolutional codes in this setting.
The last two subsections are concerned with the practical aspects of the attack on the specific schemes.

\subsection{The McEliece Cryptosystem and Variants based on Convolutional Codes}

The first code-based cryptosystem was introduced by McEliece in his seminal paper \cite{mceliece}.
It can be described as follows.
The parameters $k, n, q, t$ are fixed and all computations are over the finite field $\F_q$.
\begin{itemize}
    \item \textbf{Key Generation}: The secret key is given by a code in the form of a $k \times n$ generator matrix $\mathbf{G}$ together with an efficient decoding algorithm $\phi$ and a random invertible $k \times k$ matrix $\mathbf{U}$ as well as a random $n \times n$ permutation matrix $\mathbf{P}$.
    The public key is given as $\G' = \mathbf{U}\G\mathbf{P}$.
    \item \textbf{Encryption}: To encrypt a message $\mathbf{m}$ of length $k$ the sender computes $\mathbf{r} = \mathbf{m} \G' + \mathbf{e}$, where $\e \in \F_q^n$ is a random error vector of Hamming weight $t$.
    \item \textbf{Decryption}:
    The receiver computes $\phi(\mathbf{r} \mathbf{P}^{-1})$ to recover $\mathbf{m} \mathbf{U}$ and then the original message $\mathbf{m}$.
\end{itemize}

In the cryptosystems in \cite{bolkema2017variations, almeidaBNS21}, the general principle is the same, but the matrices and vectors are not necessarily over $\F_q$ but over $\F_q[z]$.
In particular, the public key $\mathbf{G}'(z)$ is a $k \times n$ matrix with entries in $\F_q[z]$.
Therefore, it generates a convolutional code.
Furthermore, $\mathbf{P}$ might not be a permutation in their schemes.
For the encryption, the authors fix not just the weight, but also the degree of the message $\mathbf{m}(z)$ and the degree of the error vector $\mathbf{e}(z)$, which now also have entries in $\F_q[z]$.
We use the methods we have developed in the previous sections to recover $\mathbf{e}(z)$ from $\mathbf{r}(z) = \mathbf{m}(z) \G'(z) + \mathbf{e}(z)$ applying our ISD framework for convolutional codes with the generator matrix $\G'(z)$.
Since we only need the public key, we do not describe the full details of the key generation and decryption with the private key of the schemes in \cite{bolkema2017variations, almeidaBNS21}.

The first challenge in applying our framework is that the codes constructed in both papers are not delay-free.
This does not pose a major challenge though since we can construct non-catastrophic convolutional codes that contain these codes, see Section \ref{sec: basics}.
Then we use ISD with a fixed amount of iterations with respect to these non-catastrophic codes.

A second problem that we encounter with both systems is the existence of low weight codewords.
To account for this we find codewords of weight $1$ or $2$ with a brute-force search.
Then we add these to all the solutions ISD found and keep only the results that have the weight within specified bounds.
Note that ISD might find several solutions since we are outside of the unique decoding radius.

With these modifications we use a depth first search after each round of ISD as described below.

Let $L_0, M_0$ be the outputs of the first stage of ISD, i.e. $L_0 = (\tilde{\mathbf{e}}_{0,1}, \ldots, \tilde{\mathbf{e}}_{0,n_0} ),$ $M_0 = ( \tilde{\mathbf{m}}_{0,1}, \ldots, \tilde{\mathbf{m}}_{0,n_0})$.
Furthermore, let $t_e$ be the total error weight and $\tilde{\mathbf{G}}^\gamma _i$ the matrices we construct from $\G$ as above.
Moreover let $\mathcal{C}_{low}, \mathcal{M}_{low}$ be the sets of the low weight codewords and corresponding messages, respectively, and let $\mathbf{m^\prime_x}$ correspond to the message for error $\mathbf{x} \in L_j$, and $\mathbf{m^\prime_c}$ to the message for $\mathbf{c} \in \mathcal{C}_{low}$, respectively.

\begin{algorithm}
	\caption{Depth-first sequential ISD} 
	\begin{algorithmic}[1]
        \Require $\text{A generator matrix } \mathbf{G}(z) \text{ of a convolutional code } \mathcal{C},$ $\text{received word }\mathbf{r}(z).$
        \Ensure $\mathbf{e, m} \text{ such that } \mathbf{r}(z)=\mathbf{m}(z)\mathbf{G}(z) +\mathbf{e}(z). $
        \State $j \gets 0, L \gets \{L_0\}, M \gets \{M_0\}$
        \State $\mathbf e \gets \underbrace{(0,0,0,0, 0, \cdots, 0,0,0)}_{s \text{ blocks of length } N}, \mathbf m\gets\underbrace{(0,0,0,0, 0, \cdots, 0,0,0)}_{s \text{ blocks of length } K}$
		\While {$j < s$}
            \While {$L_j \neq \varnothing$}
                \If {$\mathrm{wt}(\mathbf{e})+\mathrm{wt} (L_{j,0})> t_e$}
                \State $L_j \gets L_j -\{L_{j,0}\}, M_j \gets M_j -\{M_{j,0}\}$
                \State \textbf{continue}
                \EndIf
                \State $\mathbf{e}_j \gets L_{j,0}, \mathbf{m}_j \gets M_{j,0}$
                \If {$j=s-1 \And \mathbf{r}(z)-\mathbf{e}(z) \in \mathcal{C}$}
                \State{\Return \textbf{e, m}}
                \Else 
                \State {$L_j \gets L_j -\{L_{j,0}\}, M_j \gets M_j -\{M_{j,0}\}, \mathbf{e}_j \gets 0, \mathbf{m}_j \gets 0$}
                \EndIf
                \State $\mathbf{r}^\prime_{j+1} \gets \mathbf{r}_{j+1} - \sum_{i=0}^{j}\mathbf{m}_i \tilde{\mathbf{G}}^\gamma_{j-i}$
                \State $L_{j+1}, M_{j+1} \gets \operatorname{Prange}(\tilde{\mathbf{G}}^\gamma_0, \mathbf{r}^\prime _{j+1},t+\varepsilon)$
                \If {$L_{j+1} \neq \varnothing$}
                \For {$(\mathbf{x,c}) \in L_{j+1} \times \mathcal{C}_{low}$}
                \If {$ \rm wt(\mathbf{x}+\mathbf{c}) \leq t+\varepsilon$}
                \State $L_{j+1} \gets L_{j+1} \cup \{\mathbf{x} + \mathbf{c}\}$
                \State $M_{j+1} \gets M_{j+1} \cup \{\mathbf{m^\prime_x}+ \mathbf{m^\prime_c}\}$
                \EndIf
                \EndFor
                \State $j \gets j+1$
                \Else
                \State {$L_j \gets L_j -\{L_{j,0}\}, M_j \gets M_j -\{M_{j,0}\}, \mathbf{e}_j \gets 0, \mathbf{m}_j \gets 0$}
                \EndIf    
            \EndWhile
            \State $\mathbf{e}_j \gets 0, \mathbf{m}_j \gets 0$
            \State $j \gets j-1$
            \If {$L_{j+1} = \varnothing$}
            \State $L_j \gets L_j -\{L_{j,0}\}, M_j \gets M_j -\{M_{j,0}\}$
            \EndIf
        \EndWhile
        
	\end{algorithmic} 
\end{algorithm}

Next we describe the experimental results we get for both of the schemes we attack.
We ran our experiments mainly on SageMath 9.4 \cite{sagemath}. Some calculations for the attack in Section \ref{exp2} were done with Magma v2.27/7 \cite{magma}.
The code which contains implementations and the attacks can be found under \url{https://git.math.uzh.ch/projects/2640}.

\subsection{Attack on ``A Variation Based on Spatially Coupled MDPC Codes''} \label{exp1}

We attacked the scheme proposed in \cite{bolkema2017variations}. We took the generator matrix in Example 5.11 for which the authors claim that it gives 80 bits security against generic decoding attacks. The authors propose a $(5,3)$ binary convolutional code. The generator matrix has a memory of $94$. The error is of degree up to $1999$ and has a total weight of $140$. 

When we attacked the system, we took $\gamma=11$, so ISD is done with a $36 \times 60$ generator matrix. The choice of $\gamma$ was motivated by the desire of having, on average, approximately one error in each decoding window. This seems to be a good compromise as for lower $\gamma$'s, the algorithm tends to get lost more often, and for higher $\gamma$'s, the algorithm is slower. At each step, Prange was run 430 times, which guarantees a total success rate of more than $98\%$ according to the formula in Section \ref{wf}. We chose $\varepsilon = 3$ as discussed in Section \ref{sec: epsilon}.
Several seeds for error generation were used and the ones for which the error was outside the range $t + \varepsilon$ were discarded.
From the probability calculations we expect to discard about $22$\% of the seeds. In the experiment we discarded 7 out of 27 errors, which is about 26\%.
For the remaining ones we ran $20$ experiments, the results of which can be seen in Figure \ref{graphic1}.
\begin{figure}[H]
  \centering

\begin{tikzpicture}
    \begin{axis}[
        title={Time (hours) vs. Seed},
        xlabel={Seed},
        ylabel={Time (hours)},
        width=13cm,
        height=8cm,
        symbolic x coords={2,
3,
4,
6,
7,
8,
10,
12,
13,
14,
15,
17,
18,
19,
20,
21,
22,
24,
25,
27},
        xtick=data,
        ymin=3, ymax=10,
        legend pos=north east,
        ymajorgrids=true, 
        enlargelimits=0.15,
        legend style={font=\footnotesize}
    ]
    
    \addplot[
        color=red,
        mark=*,
        mark options={fill=red},
        thick,
        nodes near coords={}, 
        every node near coord/.append style={}
    ] coordinates {
        (2, 5.64833334836695)
(3, 3.12585493584474)
(4, 6.94295182486375)
(6, 3.82750995337963)
(7, 8.61404017620617)
(8, 6.01421779188845)
(10, 5.28673950466845)
(12, 6.16026298913691)
(13, 4.84776269879606)
(14, 4.75675189581182)
(15, 3.94985105223126)
(17, 5.15939919730028)
(18, 7.51064346194267)
(19, 9.84284572899342)
(20, 7.65496225827270)
(21, 4.32993602322208)
(22, 7.55310750226180)
(24, 5.26825166251924)
(25, 8.05503328541915)
(27, 6.99461720665296)

    };

    \end{axis}
\end{tikzpicture}

  \caption{Running time of the experiments for the scheme proposed in \cite{bolkema2017variations}.}
  \label{graphic1}
\end{figure}
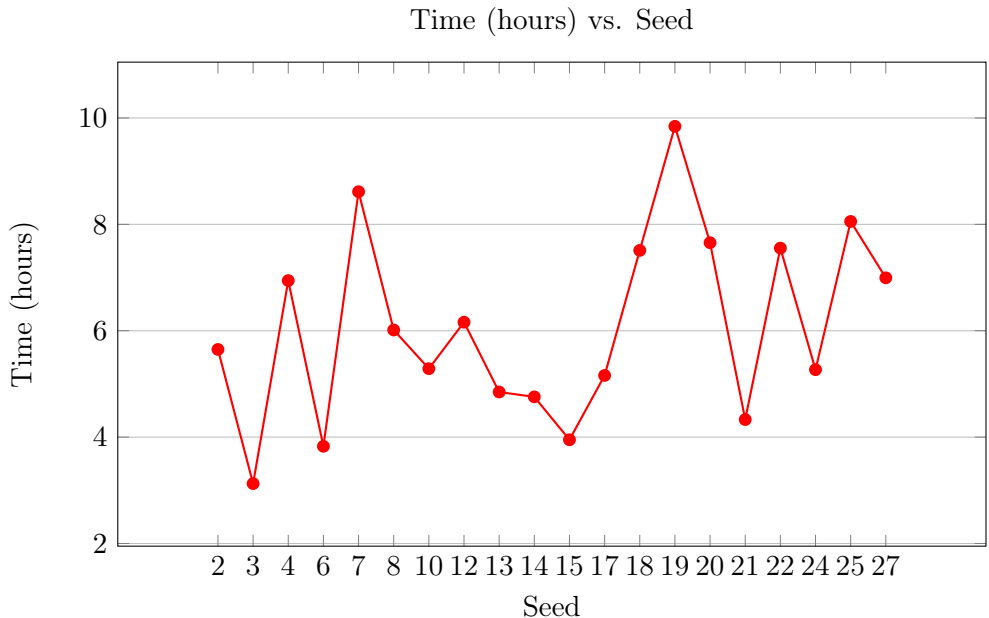
All the experiments correctly recovered the error in less than $10$ hours.
This translates to a bit security of at most $56$ bits. For how we translate between bits and running time, please consult Section \ref{subsec:technicalities}.
Note that we think that our code could be optimized quite a bit and therefore this is an overestimate.
We also ran our experiment with another generator matrix with the key generation as described in \cite{bolkema2017variations}, Section 5.5, for the same parameters to verify our results.
It also finished successfully.
\subsection{Attack on ``Smaller Keys for Code-Based Cryptography: McEliece Cryptosystems with Convolutional Encoders'' 2021} \label{exp2}
We implemented the scheme proposed in \cite{almeidaBNS21} for the parameters for $128$ bit security as given in Example 17 of that paper.
We used several generator matrices and errors to estimate the security of the scheme.
For this scheme we know more about the errors, namely in their example for the $128$ bit security level, the authors use an error with $50$ blocks where the total error weight is $133$ and the sum of the weights of $3$ consecutive blocks is $8$.
This tells us that the blocks have weights $a, b, 3, a, b, 3, \ldots, a, b$ where $a+b = 5$.
Therefore, we take $t=2, \varepsilon = 3$, which would also be a good choice based on the probability calculations. Of course in this case this $\varepsilon$ always works.
For the experiments, we took $\gamma=0$, so $\mathbf{G}_0^\gamma = \mathbf{G}_0$. This means that the size of our block code generated by $\mathbf{G}_0^\gamma$ has size $N=n=62$ and dimension $K=k=30$. At each step, the ISD runs for $700$ iterations, ensuring a total success rate of more than $99\%$.

To give estimates for the running time, we ``cheated'' and checked at each step at which position of the output of the ISD the correct error is. Since the algorithm usually does not run further on a wrong branch than a single node, we expect that adding up the positions of the correct error in the output of the ISD gives us a relatively precise estimate of the total number of ISD iterations needed to recover the correct error.
For the estimates which indicated that we could recover the error quickly, we ran the actual experiments without cheating to verify our estimates.
This was the case for seeds $1$ and $10$.
In both cases, the actual time required was below the estimate.
The results can be seen in Figure \ref{graphic2}. For how we translate between bits and running time, see Section \ref{subsec:technicalities}.

\begin{figure}
\centering
\pgfplotsset{compat=1.17}

\begin{tikzpicture}
    \begin{axis}[
        title={Estimated, Actual bits vs. Seed},
        width=13cm,
        height=8cm,
        xlabel={Seed},
        ylabel={Bits},
        symbolic x coords={1,3,4,6,7,8,9,10,11,13,14,15,16,17,18,19},
        xtick=data,
        ymin=40, ymax=70,
        legend pos=north east,
        ybar,
        bar width=10pt,
        enlargelimits=0.15,
        nodes near coords,
        every node near coord/.append style={xshift=-5pt,yshift=0pt,anchor=east,font=\footnotesize},
        legend style={font=\footnotesize},
        ymajorgrids=true
    ]
    \addplot[draw=black,
        smooth,
        thick,
        mark=*, black,
        nodes near coords={},
        every node near coord/.append style={}
    ] coordinates {
        (1,56.93) (3,59.75) (4,64.09) (6,67.31) (7,64.69) (8,67.88) (9,64.5) (10,51.23) (11,64.55) (13,63.09) (14,60.59) (15,62.84) (16,59.66) (17,65.8) (18,64.03) (19,66.66)
    };
    \addlegendentry{Estimated bits}
    \addplot[
        color=red,
        fill=red
    ] coordinates {
        (1,55.05) (10,50.5)
    };
    \addlegendentry{Actual bits}

    \end{axis}
\end{tikzpicture}
\caption{Estimated bit security for \cite{almeidaBNS21}. For seeds 1 and 10 we also gave the actual running time.}
\label{graphic2}
\end{figure}
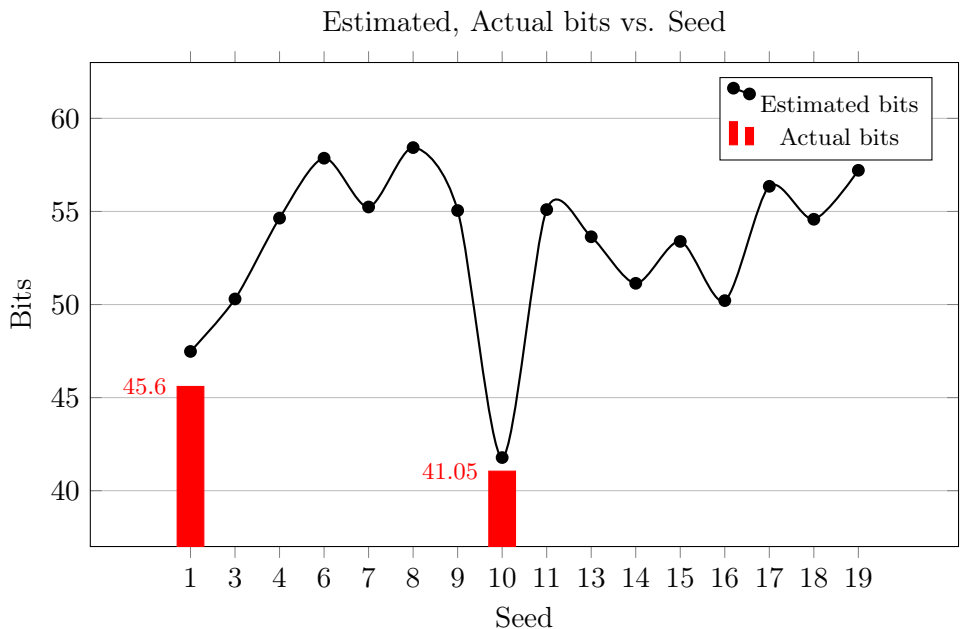
In total we ran 20 experiments, of which 16 successfully finished. In the other cases, the correct error was not found at some step.
In these cases we got only partial recovery due to low weight codewords of weight $3$ and $4$.
We did not include the seeds corresponding to those experiments in the figure above.

\subsubsection{Probability of Getting Lost}

We will justify that, in the attack on \cite{almeidaBNS21}, our algorithm is very unlikely to proceed further than one layer on a wrong branch. Assume the algorithm is correct up to degree $j$, and then, for a given wrong error $\tilde{\mathbf{e}}_{j+1}$ at degree $j+1$, finds a solution $\tilde{\mathbf{e}}_{j+2}$ for degree $j+2$ as well. Let $\mathbf{e}_{j+1}$ and $\mathbf{e}_{j+2}$ be the correct solutions at the respective degrees. Define $\mathbf{c}_{j+1} := \mathbf{e}_{j+1}-\tilde{\mathbf{e}}_{j+1}$ and $\mathbf{c}_{j+2} := \mathbf{e}_{j+2}-\tilde{\mathbf{e}}_{j+2}$. Then:

\begin{itemize}
    \item $\mathbf{c}_{j+1}$ is non-zero,
    \item $(\mathbf{c}_{j+1}, \mathbf{c}_{j+2}) \in \text{rowspan}\begin{pmatrix} \mathbf{G}_0 & \mathbf{G}_1 \\ \mathbf{0} & \mathbf{G}_0 \end{pmatrix},$ 
    \item $\wt(\tilde{\mathbf{e}}_{j+1}) = \wt(\mathbf{e}_{j+1})$ and $\wt(\tilde{\mathbf{e}}_{j+2}) = \wt(\mathbf{e}_{j+2})$.
\end{itemize}
Thus, let $\mathbf{e} \in \mathbb{F}_q^N$ be fixed of weight $t_e$ and $\mathbf{c} \in \mathbb{F}_q^N$ of weight $t_{c}$. We will assume that $N \gg t_e, t_c$, which also holds in practice. We are interested in the probability of $\wt(\mathbf{e} + \mathbf{c}) = \wt(\mathbf{e})$. Let $z$ be the number of elements of $\mathbf{e}$ that become zero after adding $\mathbf{c}$ (the number of ``cancellations''). Notice that to preserve weight we then need to have $z$ non-overlaps, i.e. $z$ elements of the support of $\mathbf{c}$ are outside of the support of $\mathbf{e}$. Notice that if $\wt(\mathbf{e} + \mathbf{c}) = \wt(\mathbf{e})$, then $z \leq \lfloor t_c/2 \rfloor$.  Furthermore, it is also necessary that $z \geq t_c - t_e$ since if $t_c > t_e$, we have at least $t_c - t_e$ entries of $\mathbf{c}$ outside of the support of $\mathbf{e}$, so we also need to cancel at least $t_c - t_e$ positions of $\mathbf{e}$. 
Now, for a fixed number of cancellations, we have a probability of
$$ \frac{\binom{t_e}{z} \binom{N - t_e}{z}(q-1)^z \binom{t_e - z}{t_c - 2 z}(q-2)^{t_c-2z}}{\binom{N}{t_c}(q-1)^{t_c}}$$
that the weight stays the same. Indeed, we have $\binom{t_e}{z}$ choices for entries of $\mathbf{c}$ that cancel entries of $\mathbf{e}$.
For entries of $\mathbf{c}$ outside of the support of $\mathbf{e}$, we choose $z$ from $N-t_e$ positions, and can have $q-1$ entries in each position, which gives us $\binom{N-t_e}{z}(q-1)^z$ possibilities.
Finally, we need to choose $t_c-2z$ positions in the remaining support of $\mathbf{e}$ and have $q-2$ choices for the entries of $\mathbf{c}$ since the entries must be non-zero and also cannot be minus the corresponding entry of $\mathbf{e}$, which is non-zero, giving us $\binom{t_e - z}{t_c - 2z} (q-2)^{t_c-2z}$ possibilities. Let $E_{t_e,t_c}$ be the conditional event that $\wt(\mathbf{e} + \mathbf{c}) = \wt(\mathbf{e})$ for an $\mathbf{e}$ and $\mathbf{c}$ such that $\wt(\mathbf{e}) = t_e$ and $\wt(\mathbf{c}) = t_c$. Summing over all possible numbers of cancellations $z$, we get a probability
$$ \mathbb{P}(E_{t_e,t_c}) = \sum_{z=0}^{\lfloor \frac{t_c}{2} \rfloor} \frac{\binom{t_e}{z}\binom{N-t_e}{z}{\binom{t_e - z}{t - 2z}} (q-2)^{t_c-2z}}{\binom{N}{t_c}(q-1)^{t_c-z}}.$$

We sum $z$ from $0$, as the summands for $z < t_c -t_e$ (when $t_c > t_e$) will be zero anyways. 

Thus, given a vector 
$$(\mathbf{c}_{j+1}, \mathbf{c}_{j+2}) \in \text{rowspan}\begin{pmatrix} \mathbf{G}_0 & \mathbf{G}_1 \\ \mathbf{0} & \mathbf{G}_0 \end{pmatrix},$$
where $\wt(\mathbf{c}_{j+1}) = t_{{c}_{j+1}}$ and $\wt(\mathbf{c}_{j+2}) = t_{{c}_{j+2}}$, and an error vector $(\mathbf{e}_{j+1}, \mathbf{e}_{j+2})$, where $\wt(\mathbf{e}_{j+1}) = t_{{e}_{j+1}}$ and $\wt(\mathbf{e}_{j+2}) = t_{{e}_{j+2}}$, we get that the probability of $ \wt(\mathbf{e}_{j+1}+\mathbf{c}_{j+1}) = \wt(\mathbf{e}_{j+1}) $ and $\wt(\mathbf{e}_{j+2} + \mathbf{c}_{j+2}) = \wt(\mathbf{e}_{j+2})$ is
$$\mathbb{P}(E_{t_{e_{j+1}}, t_{c_{j+1}} }) \cdot \mathbb{P}(E_{t_{e_{j+2}}, t_{c_{j+2}} }).$$

In Table \ref{prob_table}, we have listed $\mathbb{P}(E_{t_{e}, t_{c }})$ for various values of $t_e$ and $t_c$ for the parameters in Example 17 of \cite{almeidaBNS21}.

\begin{table}[h]
\caption{$\mathbb{P}(E_{ t_e,t_c})$ for some values of $t_e$ and $t_c$.}\label{prob_table}
\begin{tabular*}{\textwidth}{@{\extracolsep\fill}lccccc}
\toprule%
& \multicolumn{5}{@{}c@{}}{$t_e$}  \\\cmidrule{2-6}%
$t_c$ & 1 & 2 & 3 & 4 & 5  \\
\midrule
1 & \num{1.58e-2} & \num{3.17e-2} & \num{4.76e-2} & \num{6.35e-2} &\num{7.94e-2}\\
    2 &  \num[round-precision=3,round-mode=figures,
     scientific-notation=true]{0.0005120327700972862}& \num[round-precision=3,round-mode=figures,
     scientific-notation=true]{0.0015194435883917386}& \num[round-precision=3,round-mode=figures,
     scientific-notation=true]{0.0030222324548833577}& \num[round-precision=3,round-mode=figures,
     scientific-notation=true]{0.005020399369572143}& \num[round-precision=3,round-mode=figures,
     scientific-notation=true]{0.007513944332458095}\\
    3  & 0 & \num{4.96e-5} & \num[round-precision=3,round-mode=figures, scientific-notation=true]{0.00017141695394052802}& \num[round-precision=3,round-mode=figures,
     scientific-notation=true]{0.0003882811015135712}& \num[round-precision=3,round-mode=figures,
     scientific-notation=true]{0.0007228804919101439}\\
\bottomrule
\end{tabular*}

\end{table}

\subsubsection{Technical Remarks}\label{subsec:technicalities}
Each of our experiments was run on two UZH I-Math servers.
One of them has a 1 x AMD EPYC\texttrademark\ 7742 2.25GHz CPU with maximum speed of 3.4 GHz, while the other one has a 1 x AMD EPYC\texttrademark\ 7502P 2.5GHz CPU that scales up to 3.35 GHz.
Note that both are AMD EPYC\texttrademark\ 7002 Series CPUS, and specifications for them can be found in \cite{AMD}, \cite{NASAEPYC}. 

Since the AMD EPYC\texttrademark\ 7742 2,25GHz CPU has four integer ALUs and two 256-bit AVX2 we get a maximum throughput of 8.
Each instance used one core, and we did not parallelise our algorithm.
We calculate the upper bound of bit operations per cycle a CPU can handle by taking into account the CPU autoscaling feature (also called turbo) and two 256-bit AVX2 hardware units. Thus, we can translate running time to an upper bound of bits of security. To do this, we first calculate the total number of cycles by multiplying the measured time with the maximum CPU frequency. AVX2 can perform up to 8 integer operations per cycle and an integer operation can consist of up to 64 bit operations. Multiplying all these values and taking the logarithm with base 2 thus gives us an upper bound for the bits of security.
For example, a running time of 10 hours translates to a bit security of at most 
$$\log_2(3.4 \times 10^9\mathrm{ Hz}\times 10 \mathrm{h} \times 3600 \mathrm{s}/\mathrm{h} \times 8 \times 64) \approx 55.8$$
on a 3.4 GHz CPU.

Also note that though we have not implemented our algorithm for multiple cores, it is well-suited for parallelisation, as for the $j+1-$th round of ISD, we only need the values of $e, m$ up to the $j-$th block.
That is to say, each round is only influenced by the nodes above it, not the nodes on the side.
Thus, if we had $n_c$ number of cores, each core could traverse the tree with one of the initial choices for $e_0$. If any core reaches the end of the tree, we could stop there, and if any core goes all the way back to the starting node, having not found the solution in that branch, it could aid any core still actively traversing the tree. This could hypothetically cut down the running time by a factor of at-most $n_c$. We expect parallelisation to be more effective on the attack on \cite{bolkema2017variations}, rather than on \cite{almeidaBNS21}, as in the latter, our algorithm does not seem to get lost as discussed in Section \ref{exp2}.  

\section{Conclusion}

We have developed a framework to iteratively decode convolutional codes with generic decoding methods. We used this framework with information-set decoding to attack two McEliece type cryptosystems where the public key is a convolutional code.
We do not have a theoretical estimate for the running time of our attacks, but experiments suggest that the running time is significantly lower than what the authors estimated for generic decoding methods. In light of our results, we suggest that convolutional code cryptosystems are designed with some defensive adaption such as tail-biting.

There are several challenging open problems for further research.
We do not have theoretical guarantees for how far the algorithm can get lost in the depth first search.
Furthermore the number of low weight codewords that we have to consider in each step of our framework needs further consideration.
Another open point is the choice of the parameter $\gamma$.
While it is clear that a smaller choice gives an exponential speedup on the ISD cost on the block code, the behaviour under depth first search and the number of low weight codewords make a general analysis complicated.
We would like to point out that it seems unlikely that one gets a satisfactory answer to all of these problems considering that for decoding algorithms of convolutional codes it often is hard to get complexity estimates.
For the Viterbi algorithm \cite{viterbi} the complexity grows exponentially in the memory and we are not aware of exact complexity estimates for sequential decoding which do not rely on random coding techniques.
A first step to gain some intuition on how the complexity scales could be to run further experiments with our framework.

This also leaves the status of convolutional code-based cryptosystems to some extent open.
While most of these systems could be broken with considerably less effort than claimed, a full theoretical analysis is difficult.
It is clear that choosing a large enough generator matrix such that ISD for block codes already becomes infeasible thwarts the attack.
On the other hand that would also lead to worse key sizes than for schemes based on linear block codes.

Another open problem, which is more accessible, is to extend our framework to time-varying convolutional codes.
In this case the same principles should apply as long as the diagonal matrices are full rank.

\section*{Acknowledgements}
The authors would like to thank Daniela Portillo del Valle and Joachim Rosenthal for fruitful discussions. The authors would also like to thank Diego Napp and Miguel Beltr\'a Vidal for answering our questions about their system. The authors would further like to thank Yazhuo Zhang for her help with translating experiment time to bit security. The authors would like to thank Violetta Weger and Anna-Lena Horlemann for their encouragement and Carsten Rose and the IT-team at I-Math UZH for their practical help and infinite patience while we tormented the math servers with our experiments. Finally, the authors would like to thank the referees for their constructive feedback and suggested improvements.

\section*{Funding}
This work has been supported in part  by armasuisse Science and Technology (Project number.: CYD-C-2020010), by the Swiss National Science Foundation under SNSF grant number 212865, and by the German research foundation, project number 513811367. The authors have no relevant financial or non-financial interests to disclose.

\end{document}